\newtheorem{prop}{Proposition}
\newtheorem{definition}{Definition}
\newtheorem{lemma}{Lemma}
\newcommand{\crc}{CNC\xspace}
\begin{document}
\sloppy
\title{Clustered Network Coding for Maintenance\\ in Practical Storage Systems}
\author{Anne-Marie Kermarrec\\INRIA \and Erwan Le Merrer\\Technicolor \and Gilles Straub\\Technicolor \and Alexandre van Kempen\\Technicolor} 

\date{}
\maketitle
\thispagestyle{empty}
  
\begin{abstract}
  Classical erasure codes, \textit{e.g.} Reed-Solomon codes, have been
  acknowledged as an efficient alternative to plain replication to
  reduce the storage overhead in reliable distributed storage systems.
  Yet, such codes experience high overhead during the maintenance process.
  In this paper we propose a novel erasure-coded framework especially tailored for networked
  storage systems. Our approach relies on the use of random codes
  coupled with a clustered placement strategy, enabling the
  maintenance of a failed machine at the granularity of multiple
  files. Our repair protocol leverages network coding techniques to
  reduce by half the amount of data transferred during maintenance, as
  several files can be repaired simultaneously. This approach, as
  formally proven and demonstrated by our evaluation on a public
  experimental testbed, enables to dramatically decrease the bandwidth
  overhead during the maintenance process, as well as the time to repair a failure.
  In addition, the implementation is made as simple as possible, aiming at a deployment into practical systems. 

\end{abstract}

\section{Introduction}
\label{introduction}

Redundancy is key to provide a reliable service in practical systems
composed of unreliable components. Typically distributed storage
systems heavily rely on redundancy to mask ineluctable disk/node
unavailabilities and failures. While three-way replication (often
called triplication) is the standard means to obtain reliability with
redundancy, it is now acknowledged that \textit{erasure codes} can
dramatically improve the storage
efficiency~\cite{Weatherspoon2002}. In other words, for a given
reliability guarantee, the storage overhead for such codes is reduced
by order of magnitude compared to replication. Several major cloud
systems as those of Microsoft~\cite{azure} or Google~\cite{google}
have recently adopted erasure codes (more specifically
\textit{Reed-Solomon codes}). Facebook was experimenting them in 2010~\cite{facebook}.
There is thus a tangible move of cloud
operators from replication to erasure coding, allowing a more
efficient use of scalability-critical resources.


Reed-Solomon codes are the de facto standard of code-based redundancy
in practice. Yet, those codes have been designed and optimized to deal
with lossy communication channels, rather than specifically targeting
networked storage systems. In fact those codes only provide tolerance
to transient failures, the level of redundancy irrevocably decreasing
with host-node failures over time. An additional maintenance mechanism
is thus key to preserve the reliability of stored information over
time, as far as it is well known that storage systems have grown to a
scale where failures have become the norm.  However, Reed-Solomon
codes are precisely known to suffer from important overhead in terms
of bandwidth utilization and decoding operations when maintenance has
to be triggered. 
In order to address these two drawbacks,
architectural solutions have been proposed~\cite{hybrid}, as well as
new code designs~\cite{hierarchical,pyramid,Khan2012}, paving the way
for better tradeoffs between storage, reliability and maintenance
efficiency. The optimal tradeoff has been very recently provided by
Dimakis \& al~\cite{DimakisInfocomm} with the use of
\textit{network coding}.
However open issues regarding the feasibility of deploying those new
codes in practical distributed storage systems remain. Indeed, very
few studies evaluate how hard it is to implement theses
codes in a production system~\cite{Duminuco2009}, as most of them are
theoretical. Moreover those new codes are examined under the
simplifying assumption that only one file is stored per failed
machine, thus ignoring practical issues when dealing with the
maintenance of multiple files.

Interestingly enough, an appealing alternative for performance is to
use randomness.  Randomness can provide a simple and efficient way to
construct optimal codes \textit{w.h.p.}, as are Reed-Solomon ones,
while offering suitable properties in terms of maintenance.  Random
Codes have been identified as good candidates to provide fault
tolerance in distributed storage
systems~\cite{Dimakis2006,Gkantsidis05networkcoding,5743599,Acedanski05howgood}.
Yet, maintaining such promising codes has not been considered in
practice so far.  In this paper we propose a novel approach to
redundancy management, combining both random codes and network coding,
to provide an efficient maintenance protocol usable in
practice. \textbf{The main intuition behind our approach  
is to apply random codes and
  network coding at the granularity of clusters hosting, enabling to
  factorize the repair cost across several files at the same time.}
This mechanism is made as simple as possible, both in terms of design
and implementation with the purpose of leveraging the power of erasure
codes, while reducing its known drawbacks.

More specifically,  our contributions are the following:

\begin{enumerate}

\item We propose a novel maintenance mechanism 
  which combines a clustered placement strategy, random codes and
  network coding techniques at the node level (\emph{i.e.}, between
  different files hosted by a single machine). This approach is called
  \textit{\crc} in the sequel, for Clustered Network Coding. \crc
  enables to halve the data transferred compared to standard erasure
  codes during the maintenance process. The overhead in terms of
  decoding operations is also reduced by order of magnitude compared
  to the reparation process of classical erasure codes. Moreover, \crc
  enables \textit{reintegration} (\emph{i.e.}, the capability to
  reintegrate nodes which have been wrongfully declared as
  failed). Finally the network load is evenly balanced between nodes
  during the maintenance process, using a simple random selection.
  This enables the storage system to scale with the number of files to
  repair, as the available bandwidth is consumed as efficiently as it
  could be. Performance claims of \crc are formally proven.
  
\item We deployed \crc on a public execution platform, namely
  Grid5000\footnote{\url{http://www.grid5000.fr}}, to evaluate its
  benefits. In the typical setup of a datacenter storage system,
  data transferred, storage needs and repair time
  have been monitored.
  We compared our solution to both triplication and
  Reed-Solomon codes. Experimental results show that the data
  transferred for maintenance is reduced by half compared to codes
  while consuming the same storage space and providing the same data
  availability. The combination of the data transfer reduction, 
decoding operations avoidance, together with a clever use of the
  available bandwidth, has a strong impact on the efficiency of
  maintenance operation: the time to repair a failed node is
  dramatically reduced thus enhancing the whole reliability of the
  system.
 
\end{enumerate}
The rest of the paper is organized as follows. We first review
the background on maintenance techniques using erasure codes in
Section~\ref{background}. Our novel approach  is presented in Section~\ref{framework} and analyzed in Section~\ref{analysis} . We then
evaluate and compare it against state of the art approaches in
Section~\ref{evaluation}. Finally, we present related work in Section~\ref{related} and conclude this paper.

\section{Motivation and Background}
\label{background}
\subsection{Maintenance in Storage Systems}

Distributed storage systems are designed to provide reliable storage
service over unreliable
components~\cite{past,357007,502054,gfs}. One of the main challenges of
such systems is their ability to overcome unavoidable component failures~\cite{google,Vishwanath}.  Fault tolerance usually relies on data redundancy; the
classical triplication is the storage policy adopted by Hadoop~\cite{hadoop} or
the Google file system~\cite{gfs} for example. Data redundancy must be
complemented with a maintenance mechanism able to recover from the
loss of data when failures occur in order to preserve the reliability
guarantees of the system over time. Maintenance has already lain at
the very heart of numerous storage systems
design~\cite{voelker,stoica,1251200,1364689}. Similarly,
reintegration, which is the capability to reintegrate replicas stored
on a node wrongfully declared as failed, was shown
in~\cite{carbonite:nsdi06} to be one of key techniques to reduce the
maintenance cost.  
All these studies focused on the maintenance of replicas. While plain
replication is easy to implement and easy to maintain, it suffers from
a high storage overhead, typically $x$ instances of the same file are
needed to tolerate $x-1$ simultaneous failures. This high overhead is
a growing concern especially as the scale of storage systems keeps
increasing. This motivates system designers to consider erasure codes
as an alternative to replication.  Yet, using erasure codes
significantly increases the complexity of the system and challenges
designers for efficient maintenance algorithms.

\subsection{Erasure Codes in Storage Systems}

Erasure codes have been widely acknowledged as much more efficient
than replication~\cite{Weatherspoon2002} with respect to storage
overhead. More specifically, Maximum Distance Separable (MDS) codes
are optimal: for a given storage overhead (\textit{i.e.} the rate between the
original quantity of data to store and the quantity of data including
redundancy), MDS codes provide the optimal efficiency in terms of data
availability.  Let us now remind the reader about the basics of an MDS
code $(n,k)$: a file to store is split into $k$ chunks, encoded into
$n$ blocks with the property that any subset of $k$ out of $n$ blocks
suffices to reconstruct the file. Thus, to reconstruct a file of
$\mathcal{M}$ Bytes one needs to download \textit{exactly}
$\mathcal{M}$ Bytes, which corresponds to the same amount of data as if
plain replication were used. Reed-Solomon codes are a classical
example of MDS codes, and are already deployed in cloud-based storage
systems~\cite{azure,google}.  However, as pointed out
in~\cite{hybrid}, one of the major concern of erasure codes lies in
the maintenance process, which incurs an important overhead in terms
of bandwidth utilization as well as in decoding operations as
explained below.

\begin{figure}[!t] 
 \begin{center}    
 \includegraphics[scale=0.5]{./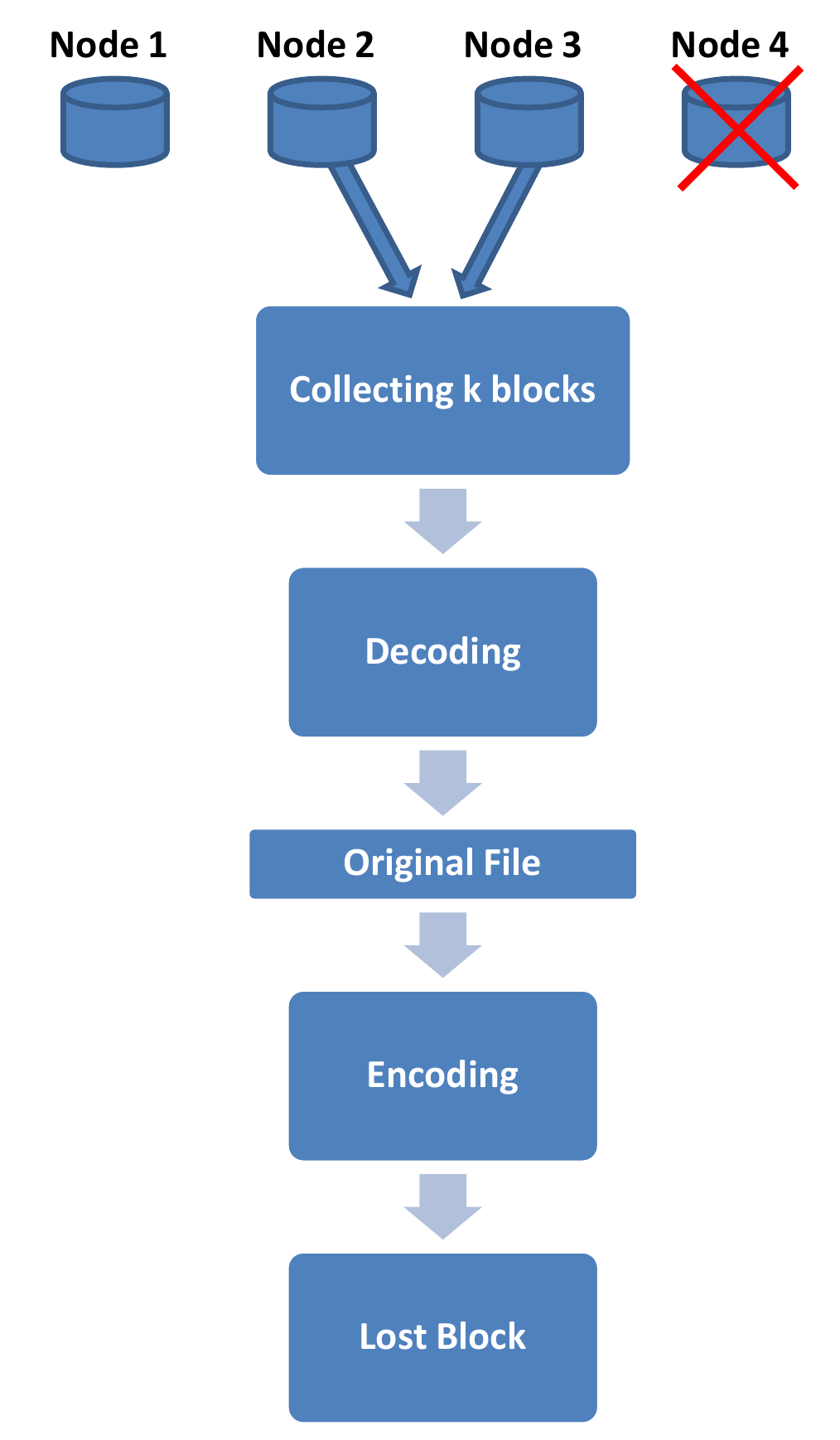} 
    \caption{Classical repair process of a single block. (k=2,n=4)}   
    \label{comparisonRepairProcess}   
  \end{center}     
   \end{figure}

\begin{figure}[t] 
 \begin{center}    
 \includegraphics[scale=0.34]{./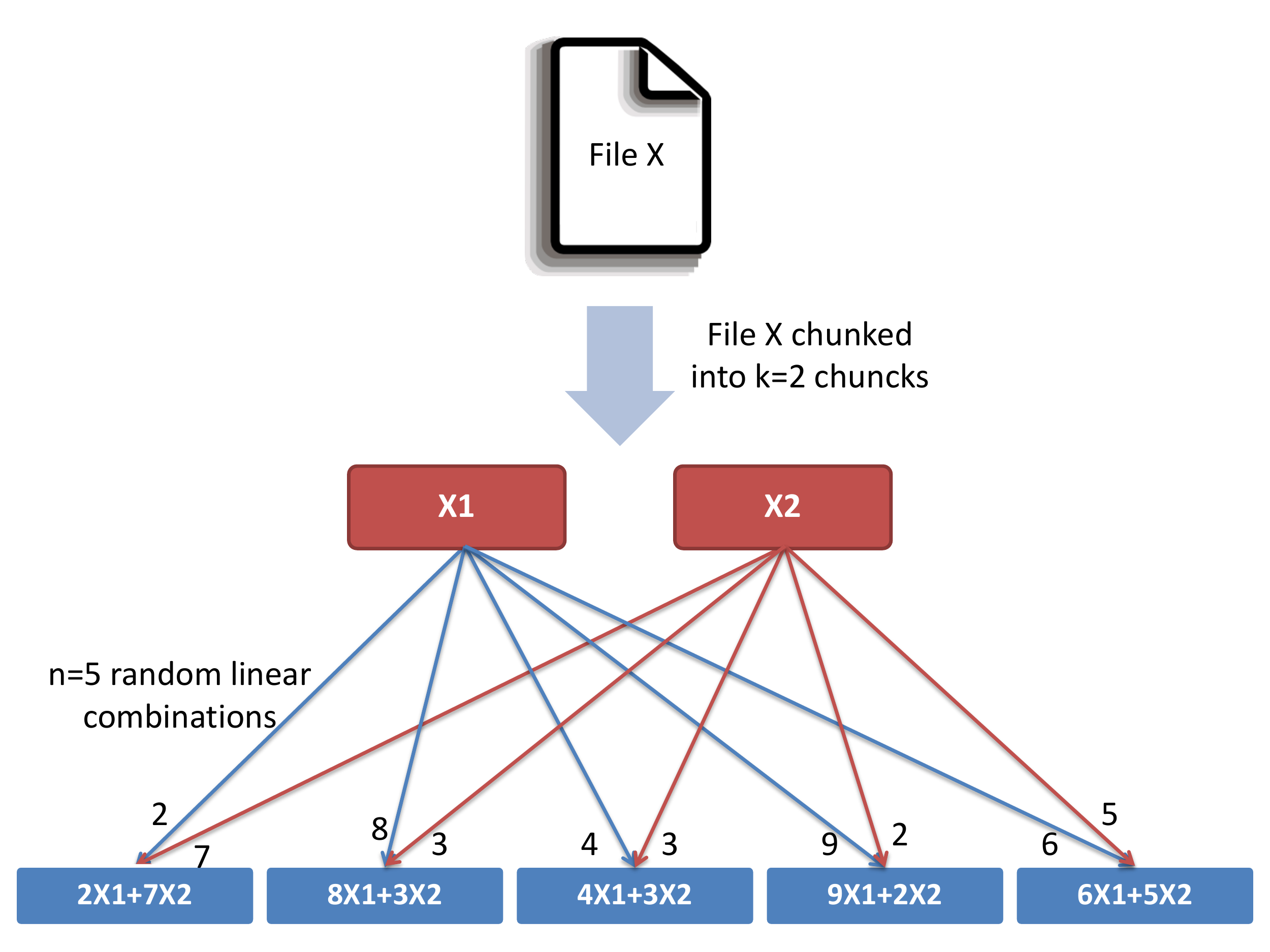} 
    \caption{Example of the creation process of encoded blocks using a random code. Any k=2 blocks is enough to reconstruct the file X}   
    \label{randomCode}   
  \end{center}     
   \end{figure}

\paragraph{Maintenance of Erasure Codes}
\label{classicalMaintenance}

When a node is declared as failed, all blocks of the files it was hosting need to be re-created on a \textit{new} node; we call this operation a \textit{repair} in the sequel. 
The repair process works as follows (see Figure~\ref{comparisonRepairProcess}): to repair one block of a given file, the new node first needs to download $k$ blocks of this file (\emph{i.e.}, corresponding to the size of the file) to be able to decode it. Once decoded, the new node can re-encode the file and then regenerate the lost redundant block. This must be iterated for all the lost blocks.
Three issues arise:
\begin{enumerate}
\item Repairing one block (typically a small part of a file) requires
  the downloading of enough blocks by the new node (\textit{i.e.} $k$) to reconstruct the entire
  file, and this must be done for all the blocks previously stored on
  the failed node.
\item  The new node must then decode the file, though it does not want to access it. Decoding operations are known to be time consuming especially for large files. 
\item Reintegrating a node which has been wrongfully declared as faulty is almost useless. This is due to the fact that the new blocks created during the repair operation have to be strictly identical to the lost ones for this is necessary to sustain the coding strategy~\footnote{This can be achieved either by a tracker maintaining the global information about all blocks or by the new node inferring the exact structure of the lost blocks from all existing ones.}. Therefore reintegrating a node results is having two identical copies of the involved blocks (the reintegrated ones and the new ones). Such blocks can only be useful if either the reintegrated node or the new node fails but not in the event of any other node failure.
\end{enumerate}

In order to mitigate these drawbacks, various solutions have been suggested.
Lazy repairs for instance as described in~\cite{1251200} consists in deliberately delaying the repairs, waiting for a successive amount of defects before repairing all the failures together. This enables to repair multiple failures while only suffering from bandwidth (\textit{i.e.} data transferred) and decoding overhead once. However delaying repairs leaves the system more vulnerable in case of a burst of failures.
Architectural solutions have also been proposed, as for example the \textit{Hybrid strategy}~\cite{hybrid}. This consists in maintaining one full replica stored on a single node in addition to multiple encoded blocks. This extra replica is thus utilized when repairs have to be triggered. However maintaining an extra replica on a single node significantly complicates the design, while incurring scalability issues.
Finally, new classes of codes have been designed~\cite{hierarchical,pyramid} which trade optimality in order to offer a better tradeoff between storage, reliability and maintenance efficiency.

\paragraph{A Case for Random Codes} 

In this paper, we argue that \textit{random linear codes} (random codes for short) may offer an appealing alternative to classical erasure codes in terms of storage efficiency and reliability, while considerably simplifying their maintenance process. Random codes have been initially evaluated in the context of distributed storage systems in~\cite{Acedanski05howgood}. Authors showed that random codes can provide an efficient fault tolerant mechanism with the property that no synchronization  between nodes is required. 
Instead, the way blocks are generated on each node is achieved independently in such a way that it will fit the coding strategy with high probability. Avoiding such synchronization is crucial in distributed settings, as also demonstrated in~\cite{Gkantsidis05networkcoding}. 


The basic principle of encoding a file using random codes is simple: each file is divided into $k$ chunks and the blocks stored for reliability are created as random linear combinations of these $k$ blocks (see Figure~\ref{randomCode}). All blocks, along with their associated coefficients, are then stored on $n$
 different nodes. Note that the additional storage space required for the coefficients is typically  negligible  compared to the size of each block.

   In order to reconstruct a file initially encoded with a given $k$,
   one needs to download $k$ different blocks of this file. Theory on
   random matrix over finite field ensures that if one takes $k$
   random vectors of the same subspace, these $k$ vectors are linearly
   independent with a probability which can be made arbitrary close to
   one, depending on the field size~\cite{Acedanski05howgood}. 
   This is a key difference with erasure codes, avoid any
   synchronization between nodes.  In other words, an encoded file can
   be reconstructed as soon as any set of $k$ encoded blocks is collected,
   and as already mentioned, this is optimal (MDS codes).

\section{Clustered Network Coding}
\label{framework}


Our \crc system is designed to sustain a predefined level
of reliability, \textit{i.e.} of data redundancy, by recovering from
failures with a limited impact on performances. We assume that the
failure detection is performed by a monitoring system, the description of which is out of the scope of this paper. We also assume that this system 
triggers the repair process, assigning new nodes to replace the faulty ones,
in charge of recovering the lost data and store it.

The predefined reliability level  is set by the storage system 
operator. This reliability level then directly translates into the
redundancy factor to be applied to files to be stored, with parameters
$k$ (number of blocks sufficient to retrieve a file) and $n$ (total
number of redundant blocks for a file). A typical scenario for using
\crc is a storage cluster like in the Google File
System~\cite{gfs}, where files are streamed into extents of the same
size, for example 1GB as in Windows Azure Storage~\cite{azure}.
These extents are erasure coded in order to save storage space.

\subsection{A Cluster-based Approach}

To provide an efficient maintenance, \crc relies on \textit{(i)} hosting  all blocks related to a set of files on a single cluster of nodes, and \textit{(ii)} repairing multiple files simultaneously .  This is achieved by combining the use of random codes, network coding and a cluster-based placement strategy. This enables to repair several files simultaneously, without requiring computationally intensive decoding operations, thus factorizing the costs of repair across the several multiple files stored by the faulty node. To this end, the system is  partitioned into disjoint clusters of $n$ nodes,
 so that each node of the storage system belongs to one and only one cluster. 
Each file to be stored is encoded using random codes and is associated to a single cluster. All blocks of a given file are then stored on the $n$ nodes of the same cluster. In other words, \crc placement strategy consists in storing blocks of two different files belonging to the same cluster on the same set of nodes, as illustrated on Figure~\ref{clusterPlacement}.

In such a setup, the storage system manager (\textit{e.g.} the master
node in the Google File System~\cite{gfs}) only needs to maintain two
data structures: an index which maps each file to one cluster and an
index by cluster which contains the set of the identifier of nodes in
this cluster. This simple data placement scheme leads to 
 significant data transfer gains and better load balancing, by clustering operations
on encoded blocks, as explained in the remaining part of this section.

\begin{figure}[t]    
\vspace{-0.8cm}
  \begin{center}    
 \includegraphics[scale=0.34]{./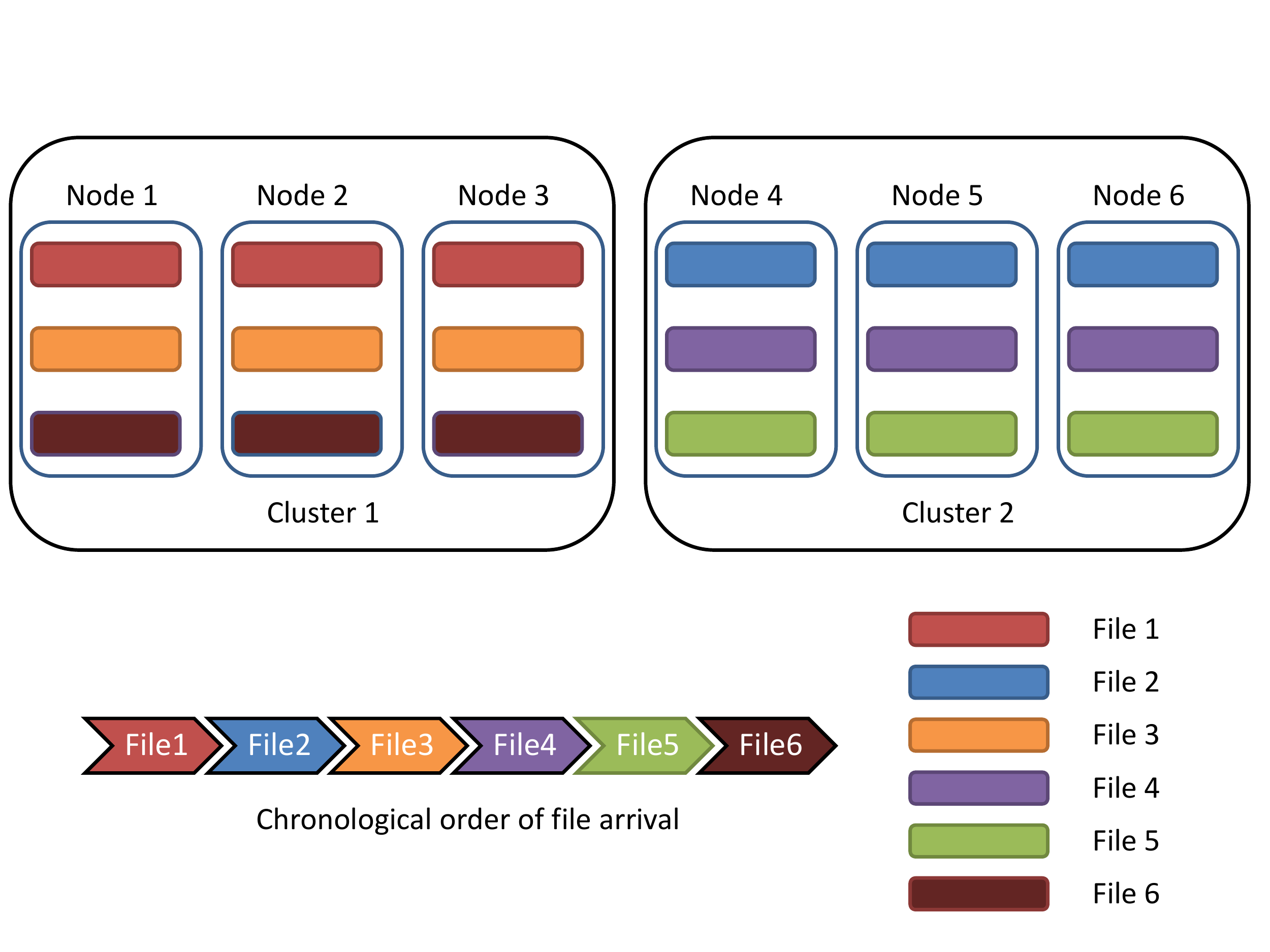} 
  \caption{Clustered placement for a $n=3$ redundant system}  
   \label{clusterPlacement}    
\end{center}     
   \end{figure}

\subsection{Maintenance of \crc}

When a node failure is declared, the maintenance operation must ensure that all the blocks hosted on the faulty node are repaired in order to preserve the redundancy factor and hence the predefined reliability level of the system. Repair is usually performed at the granularity of a file. Yet, a node failure typically leads to the loss of several blocks, involving several files. This is precisely this characteristic that \crc leverages. Typically, when a node fails,  multiple repairs are triggered, one for each particular block of one file that the failed node was storing.  Traditional approaches using erasure codes actually consider a failed node as the failure of all of its blocks.   \textbf{Instead, the novelty of \crc is to leverage network coding at the node level, \textit{i.e.} between different files on a particular cluster.} This is possible since \crc placement strategy clusters files so that all nodes of a cluster store the same files.  This technical shift enables to significantly reduce the data to be transferred during the maintenance process.

\begin{figure*}[t!] 
 \begin{center}    
 \includegraphics[scale=0.5]{./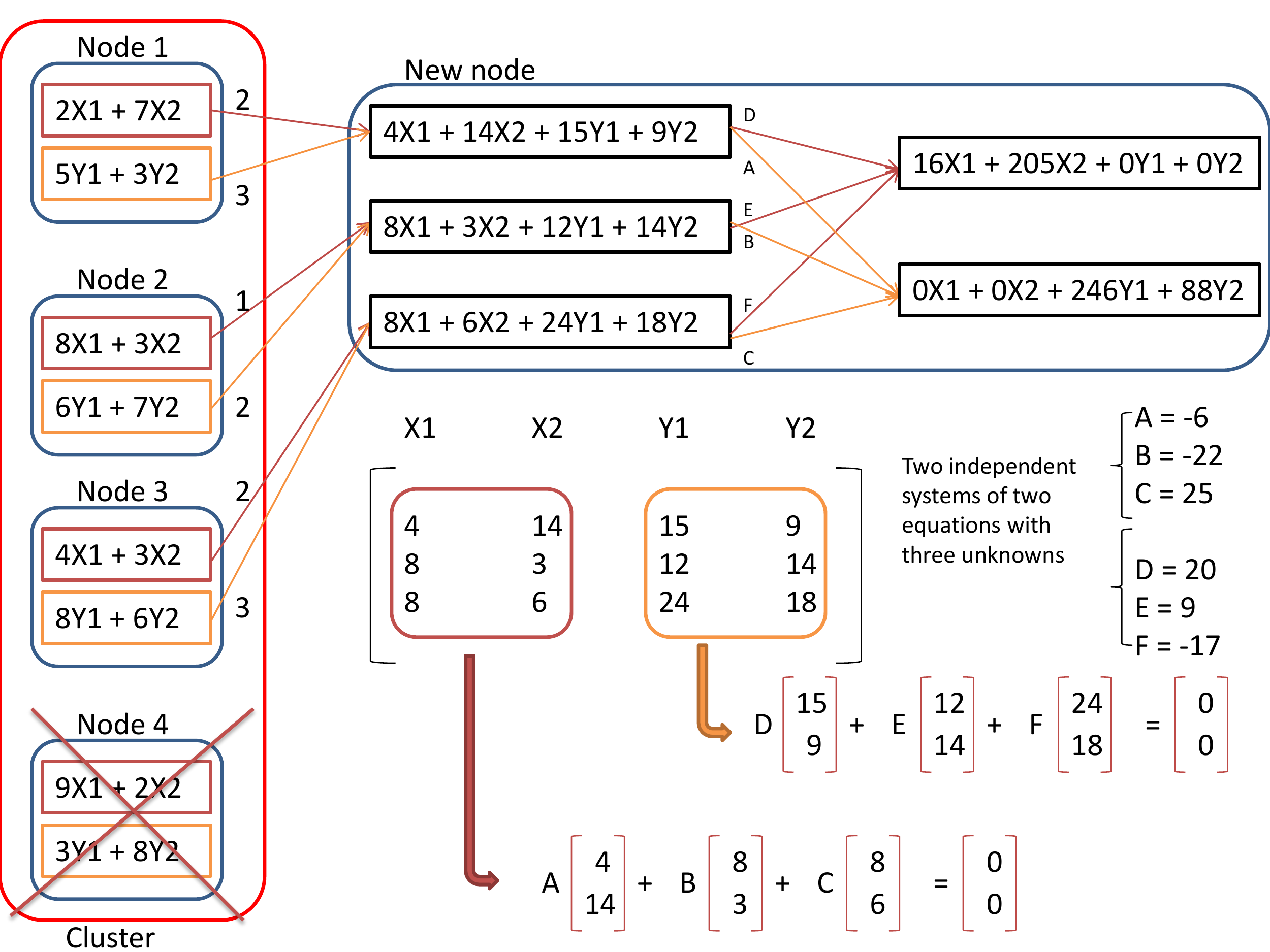} 
    \caption{Example of a \crc repair process, for the repair of a new node in a cluster of $4$ (with $k=2, n=4$).}   
    \label{repairMechanism}   
  \end{center}     
   \end{figure*}
   
\subsection{An Illustrating Example}
To provide the intuition of \crc, and before generalizing in the next
section, we now describe a simple example (see
Figure~\ref{repairMechanism}) involving  two files and a 4 node
cluster.  We consider two files X and Y of size $\mathcal{M}=1024$ MB, encoded with
random codes ($k=2$, $n=4$), stored on the 4 nodes of the same cluster
(\textit{i.e.} Nodes 1 to 4). File X is chunked into $k=2$ chunks $X_1$, $X_2$ as well as file Y into chunks $Y_1$ and $Y_2$. Each node stores two encoded blocks, one related to file X and the other to file Y which are respectively a random linear combination of $\lbrace X_1,X_2 \rbrace$ and $\lbrace Y_1,Y_2 \rbrace$. Each block has a size of $\frac{\mathcal{M}}{k}=512$ MB, thus each
node stores a total of $2\times512=1024$ MB.  We now consider the
failure of Node 4.

In a classical repair process, the new node asks to $k=2$ nodes their block corresponding to file X and Y and thus downloads 4 blocks, for a total of $4\times512=2048$ MB. This enables the new node to  decode the two files independently, and then re-encode each file to regenerate one block for file X and one for file Y and store them.  

Instead, \crc  leverages the fact that the encoded blocks related to files X and Y are stored on the same node and restored on the same new node to encode the files together rather than independently during the repair process. More precisely, if the nodes are able to compute a linear combination of their encoded blocks, we can prove that if $k=2$, only 3 blocks are sufficient to perform the repair of the two files X and Y. Thus the transfer of only $3$ blocks incurs the download of $3\times512=1536$ MB, instead of the $2048$ MB needed with the classical repair process.  In addition, this repair can be processed without decoding any of the two files. In practice, the new node has to contact the three remaining nodes to perform the repair. Each of the three nodes sends the new node a random linear combination of its two blocks with the associated coefficients. Note that the two files are now mixed, \textit{i.e.} encoded together. However, we want to be able to access each file independently after the repair. \textbf{The challenge is thus to create two new random blocks, with the restrictions that one is only a random linear combination of the X blocks, and the other of the Y blocks.}
In this example, finding the appropriate coefficients in order to cancel the $X_i$ or $Y_i$,  comes down to solve two independent systems of two equations with three unknowns as shown in Figure~\ref{repairMechanism}. The intuition is that, as coefficients of these equations are random, these two systems are always solvable \textit{w.h.p.}. The new node then makes two different linear combinations of the three received blocks according to the previously computed coefficients, $(A,B,C)$ and $(D,E,G)$ in the example. Thereby it creates two new independent random blocks, one related to file X and one to file Y. The repair is then performed, saving the bandwidth consumed by the transfer of one block \emph{i.e.}, $512$ MB in this example. 

\subsection{\crc: The General Case}
\label{generalCase}

We now generalize the previous example for any $k$. We first define a \textit{RepairBlock} object: a RepairBlock is a random linear combination of two encoded blocks of two different files stored on a given node. RepairBlocks are transient objects which only exist during the maintenance process \emph{i.e.}, RepairBlocks are never stored permanently.

We are now able to formulate the core technical result of this paper; the
following proposition applies in a context where different files are
encoded using random codes with the same $k$, and the encoded blocks
are placed according to the cluster placement described in the
previous section.

\begin{prop} 
\label{prop1}
In order to repair two different files, downloading $k+1$ RepairBlocks
from $k+1$ different nodes is a sufficient condition.
\end{prop}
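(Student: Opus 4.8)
The plan is to track the \emph{coefficient vectors} of the encoded blocks and reduce the repair to a pair of homogeneous linear systems. Work over the finite field $GF(q)$ on which the random codes are built. Write the $k$ chunks of the two files as $X_1,\dots,X_k$ and $Y_1,\dots,Y_k$, and label the $k+1$ contacted nodes $1,\dots,k+1$. Node $i$ stores an encoded block of $X$ with coefficient vector $a_i=(a_{i,1},\dots,a_{i,k})$ and an encoded block of $Y$ with coefficient vector $b_i=(b_{i,1},\dots,b_{i,k})$, all entries drawn uniformly at random in $GF(q)$. When contacted, node $i$ sends the RepairBlock $r_i=\lambda_i\,(a_i\cdot X)+\mu_i\,(b_i\cdot Y)$, where $\lambda_i,\mu_i$ are fresh random scalars and $a_i\cdot X$ abbreviates $\sum_j a_{i,j}X_j$; equivalently $r_i$ is described by the $2k$-dimensional vector $(\lambda_i a_i\mid\mu_i b_i)$.

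Next I would do the repair of file $X$. The new node seeks scalars $c_1,\dots,c_{k+1}\in GF(q)$, not all zero, so that $\sum_{i=1}^{k+1}c_i r_i$ carries no $Y$-component, i.e.
\[
\sum_{i=1}^{k+1} c_i\,\mu_i\,b_i = 0 \quad\text{in } GF(q)^k .
\]
This is a homogeneous linear system of $k$ equations in $k+1$ unknowns, so its solution space has dimension at least one and a non-trivial $c$ always exists. For such a $c$ the block $\sum_i c_i r_i=\big(\sum_i c_i\lambda_i a_i\big)\cdot X$ is a linear combination of $X_1,\dots,X_k$ only, hence a legitimate new encoded block of $X$ with coefficient vector $a_{\mathrm{new}}=\sum_i c_i\lambda_i a_i$. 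By the symmetric computation the new node also solves $\sum_i d_i\lambda_i a_i=0$ ($k$ equations, $k+1$ unknowns, again solvable) and forms $\sum_i d_i r_i=\big(\sum_i d_i\mu_i b_i\big)\cdot Y$, a pure $Y$-block with coefficient vector $b_{\mathrm{new}}=\sum_i d_i\mu_i b_i$. Both new blocks come from the same $k+1$ RepairBlocks, so downloading $k+1$ RepairBlocks from $k+1$ distinct nodes is indeed sufficient.

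It then remains to justify the \emph{w.h.p.} correctness of the outcome: that the two new blocks, together with the $n-1$ blocks surviving on the other cluster nodes, still form valid random codes, i.e. any $k$ of the $n$ coefficient vectors for file $X$ (resp. $Y$) are linearly independent. Each such $k\times k$ determinant is a polynomial in the random entries $\{a_{i,j}\},\{b_{i,j}\},\{\lambda_i\},\{\mu_i\}$; one checks it is not the zero polynomial by exhibiting a single non-degenerate assignment, so by the Schwartz--Zippel lemma it vanishes with probability at most $\deg/q$, which is made arbitrarily small by taking $q$ large. A union bound over the finitely many $k$-subsets and over both files gives the claim, mirroring the finite-field random-matrix argument already invoked for ordinary random codes.

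\textbf{The main obstacle is this last step.} Unlike a plain random code, the coefficient vector $a_{\mathrm{new}}$ of the repaired block is not an independent uniform vector: it is built from the null-space solution $c$, which itself depends on the $b_i$ and $\mu_i$. One must therefore reason at the level of polynomial non-vanishing (or fix a convenient non-degenerate instance and invoke Schwartz--Zippel) rather than by a naive independence argument, and make the dependence of $c$ on the randomness explicit enough to see that the relevant determinants are genuinely non-zero polynomials.
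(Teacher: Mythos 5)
Your core argument is the same as the paper's: a RepairBlock from node $i$ superposes an $X$-component and a $Y$-component, and because $k+1$ vectors in a space of dimension at most $k$ must be linearly dependent, one can find a combination of the $k+1$ RepairBlocks that cancels all $Y$-contributions (yielding a pure new $X$-block) and, symmetrically, one that cancels all $X$-contributions. You phrase this directly on the $k$-dimensional coefficient vectors as a homogeneous system of $k$ equations in $k+1$ unknowns, which is equivalent to, and arguably a little cleaner than, the paper's route through the subspaces $\operatorname{span}\{X_1,\dots,X_k\}$ and its auxiliary lemma about maximal independent subfamilies of the chunks. Where you diverge is in arguing that the new coefficient vector is ``as good as'' a fresh random one. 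The paper simply observes that the cancellation coefficients $\alpha_i$ (your $c_i$) are a function only of the $Y$-side randomness $(\mu_i,b_i)$ and are therefore independent of the $X$-side randomness $(\lambda_i,a_i)$; conditioning on $(c_i)\neq 0$, Lemma~1 of the paper gives that each coordinate of $a_{\mathrm{new}}=\sum_i c_i\lambda_i a_i$ is uniform, i.e.\ $a_{\mathrm{new}}$ is a ``random vector'' in the paper's sense. You instead reach for a Schwartz--Zippel argument that you don't carry out. Your underlying worry is legitimate --- marginal uniformity of $a_{\mathrm{new}}$ does not by itself give the joint linear-independence-w.h.p.\ property, since $a_{\mathrm{new}}$ is correlated with the surviving $a_i$ of the contacted nodes --- but be aware that the paper's own proof only establishes this marginal statement and leaves the joint-distribution question implicit. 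So your proof is essentially equivalent; the obstacle you flag at the end is a real subtlety, but it is also present (and equally unaddressed) in the paper's argument, and the simpler way to close it along the paper's lines is the cross-file independence observation, not Schwartz--Zippel.
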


Repairing two files jointly actually comes down to create one new
random block for each of the two files; the formal proof of this
proposition is given in Appendix. This proposition implies that
instead of having to download $2k$ blocks as with Reed-Solomon codes
when repairing, \crc decreases that need to
only $k+1$.  Other implications and analysis are detailed in the next section.

We shall notice that the encoded blocks of the two files do not need to have the same size. In case of different sizes, the smallest is simply zero-padded during the network coding operations as usually done in this context; padding is then removed at the end of the repair process.

\begin{figure}[t] 
\vspace{-0.5cm}
 \begin{center}    
 \includegraphics[trim = 1cm 0cm 1cm 1cm,scale=0.34]{./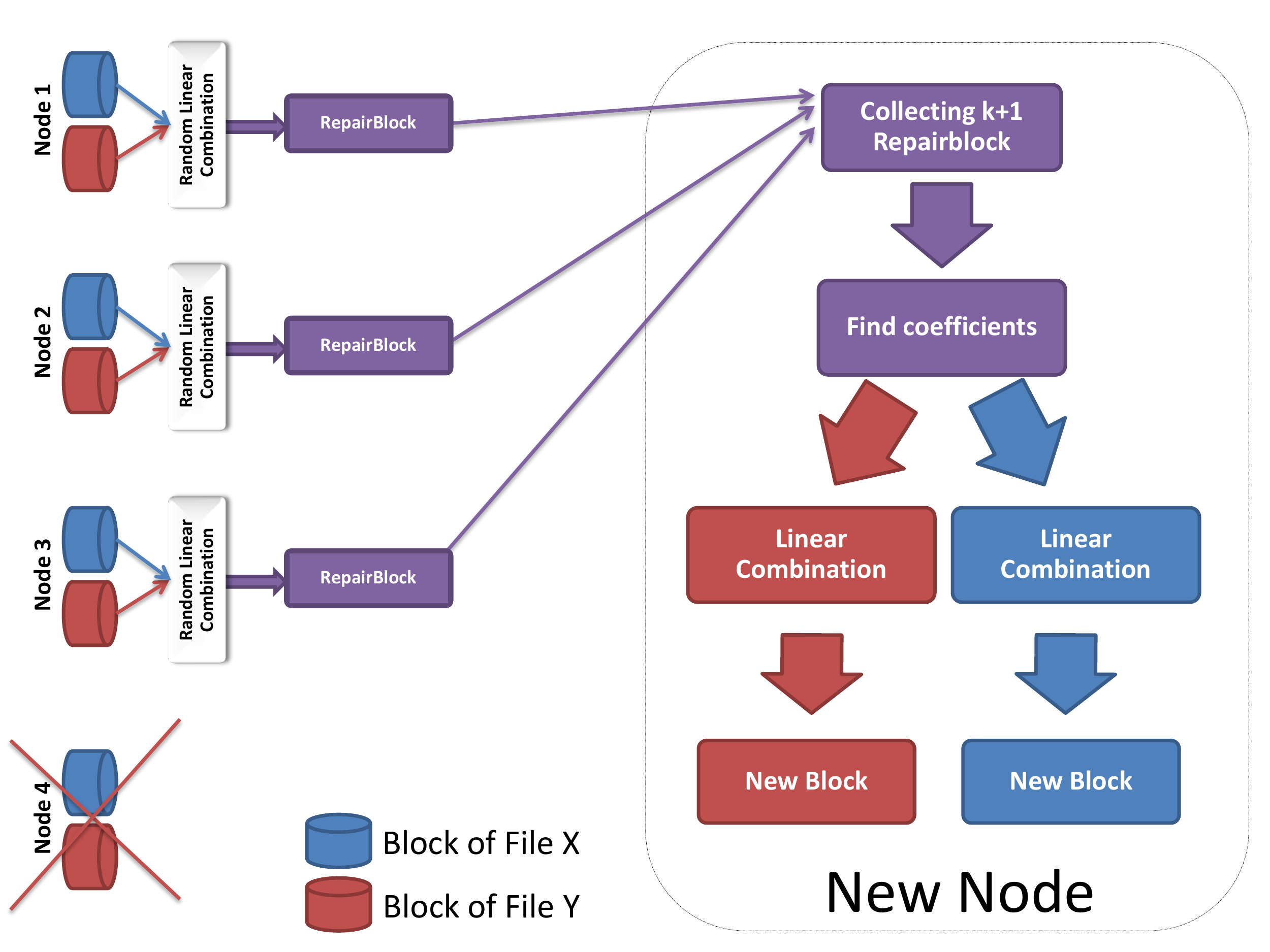} 
    \caption{One iteration of the repair process, at the end of which two encoded blocks are repaired.}   
    \label{ourRepairProcess}   
  \end{center}     
   \end{figure}

Figure~\ref{ourRepairProcess} describes one iteration of the process at the end of which two encoded blocks are repaired. Each of the $k+1$ nodes sends a RepairBlock to the new node, which then combines them to restore the two lost encoded blocks. However nodes usually store far more than two blocks, implying multiple iterations of the process described in Figure~\ref{ourRepairProcess}.
More formally, to restore a failed node which was storing $x$ blocks, the repair process must be iterated $\frac{x}{2}$ times. In fact, as two new blocks are repaired during each iteration, the number of iteration is halved compared to the classical repair process.
Note that in case of an odd number of blocks stored, the repair process is iterated until only one block remains. The last block is repaired downloading $k$ blocks of the corresponding file which are then randomly combined to conclude the repair. The overhead related to the repair of the last block in case of an odd block number vanishes with a growing number of blocks stored.

The fact that the repair process must be iterated several times can also be leveraged to balance the bandwidth load over all the nodes in the cluster.
Only $k+1$ nodes over the $n$ of the cluster are selected at each iteration of the repair process; as all nodes of the cluster have a symmetrical role, a different set of $k+1$ nodes can be selected at each iteration.
In order to leverage the whole available bandwidth of the cluster, \crc makes use of a random selection of these $k+1$ nodes at each iteration.
In other words, for each round of the repair process, the new node selects $k+1$ nodes over the $n$ cluster nodes randomly. 
Doing so, we show that every node is evenly loaded \emph{i.e.}, each node sends the same number of RepairBlocks in expectation. 

More formally, let N be the number of RepairBlocks sent by a given node. In a cluster where $n$ nodes participate in the maintenance operation, for $T$ iterations of the repair process, the average number of RepairBlocks sent by each node is : 
\begin{equation}
\label{meanLoad}
E(N) = T\frac{k+1}{n}
\end{equation} 

The proof is given in Appendix.
An example illustrating this proposition is provided in the next section.

\section{CNC Analysis}
\label{analysis}

\begin{figure}[t] 
 \begin{center}    
 \includegraphics[trim = 1cm 0cm 1cm 1cm,scale=0.83]{./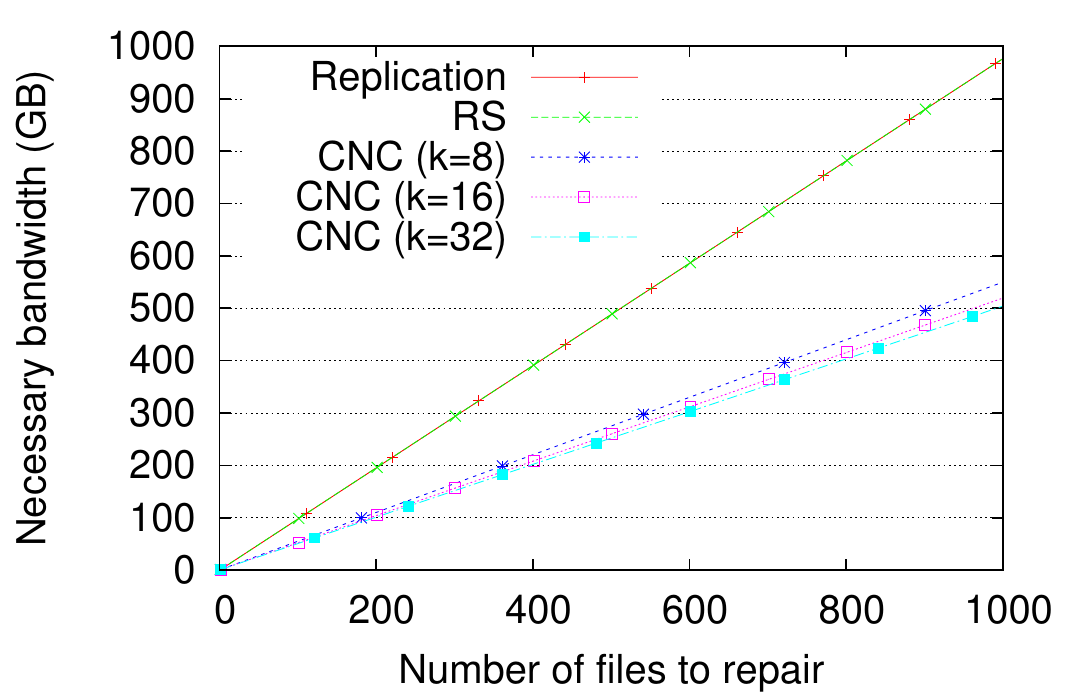} 
    \caption{Necessary amount of data to transfer to repair a failed node, 
according to the selected redundancy scheme (files of $1$ GB each).}   
    \label{necessaryBandwidth}   
  \end{center}     
   \end{figure}

The novel maintenance protocol proposed in the previous section enables \textit{(i)} to significantly reduce the amount of data transferred during the repair process; \textit{(ii)} to balance the load between the nodes of a cluster; \textit{(iii)} to avoid computationally intensive decoding operations and finally \textit{(iv)} to provide useful node reintegration. The benefits are detailed below.

\subsection{Transfer Savings}

A direct implication of Proposition~\ref{prop1} is that for large
enough values of $k$, the data to transfer required to perform a
repair is halved; this directly results in a better usage of available
bandwidth within the datacenter. To repair two files in a
classical repair process, the new node needs to download at least $2k$
blocks to be able to decode each of the two files. Then the ratio
$\frac{k+1}{2k}$ (\crc over Reed-Solomon) tends to $1/2$ as larger values of $k$ are used.

The exact necessary amount of data $\sigma(x,k,s)$ to repair $x$ blocks of size $s$ encoded with the same $k$ is given as follows: 

$$
\sigma(x,k,s) = \left\{
    \begin{array}{ll}
        \frac{x}{2}s(k+1) & \mbox{if x is even }\\
         \frac{x}{2}s(k+1+\frac{k-1}{x}) & \mbox{if x is odd}
    \end{array}
\right.
$$

An example of the transfer savings is given in Figure~\ref{necessaryBandwidth}, for $k=16$ and a file size of $1GB$.  


We described in \crc, through Proposition~\ref{prop1}, the need to
repair lost files by groups of two. One can wonder whether there is a
benefit in grouping more than two files during the repair.  In fact a
simple extension of Proposition~\ref{prop1} is that to group
$\mathcal{G}$ files together, a sufficient condition is that the new
comer downloads $(\mathcal{G}-1)k +1$ RepairBlocks from
$(\mathcal{G}-1)k +1$ distinct nodes over the $n$ ones in the cluster.
Firstly, this implies that the new node must be able to contact many
more nodes than $k+1$.  Secondly, we can easily see that the gains made
possible by \crc are maximal when grouping the file by two: savings in
data transfer when repairing are expressed by the ratio
$\frac{(\mathcal{G}-1)k +1 }{\mathcal{G}k}$. The minimal value of this
ratio ($\frac{1}{2}$, which is equivalent to the maximal gain) is
obtained for $\mathcal{G}=2$ and large value of $k$.

A second natural question is whether or not downloading fewer than $(\mathcal{G}-1)
k+1$ RepairBlocks to group $\mathcal{G}$ files together is possible. We can
positively answer this question, as the value $(\mathcal{G}-1)k +1 $ is only a
\textit{sufficient} condition. In fact, if nodes do not send random
combinations, but carefully choose the coefficients of the
combination, it is theoretically possible to download less
RepairBlocks. However, as $\mathcal{G}$ grows,  finding the "adequate" coefficients becomes computationally  
intractable, especially for large values of $k$. These coefficients can be found in some cases using 
\textit{interference alignment} techniques (see for example~\cite{DimakisSurvey}). 
However details of these techniques are outside the scope of this paper as no efficient algorithm is known to solve this problem to date. This then calls for the use of the simpler operation \emph{i.e.}, $\mathcal{G}=2$ as we have presented in this paper.

\subsection{Load Balancing}

\begin{figure}[!t]
 \begin{center}
 \includegraphics[scale=0.35]{./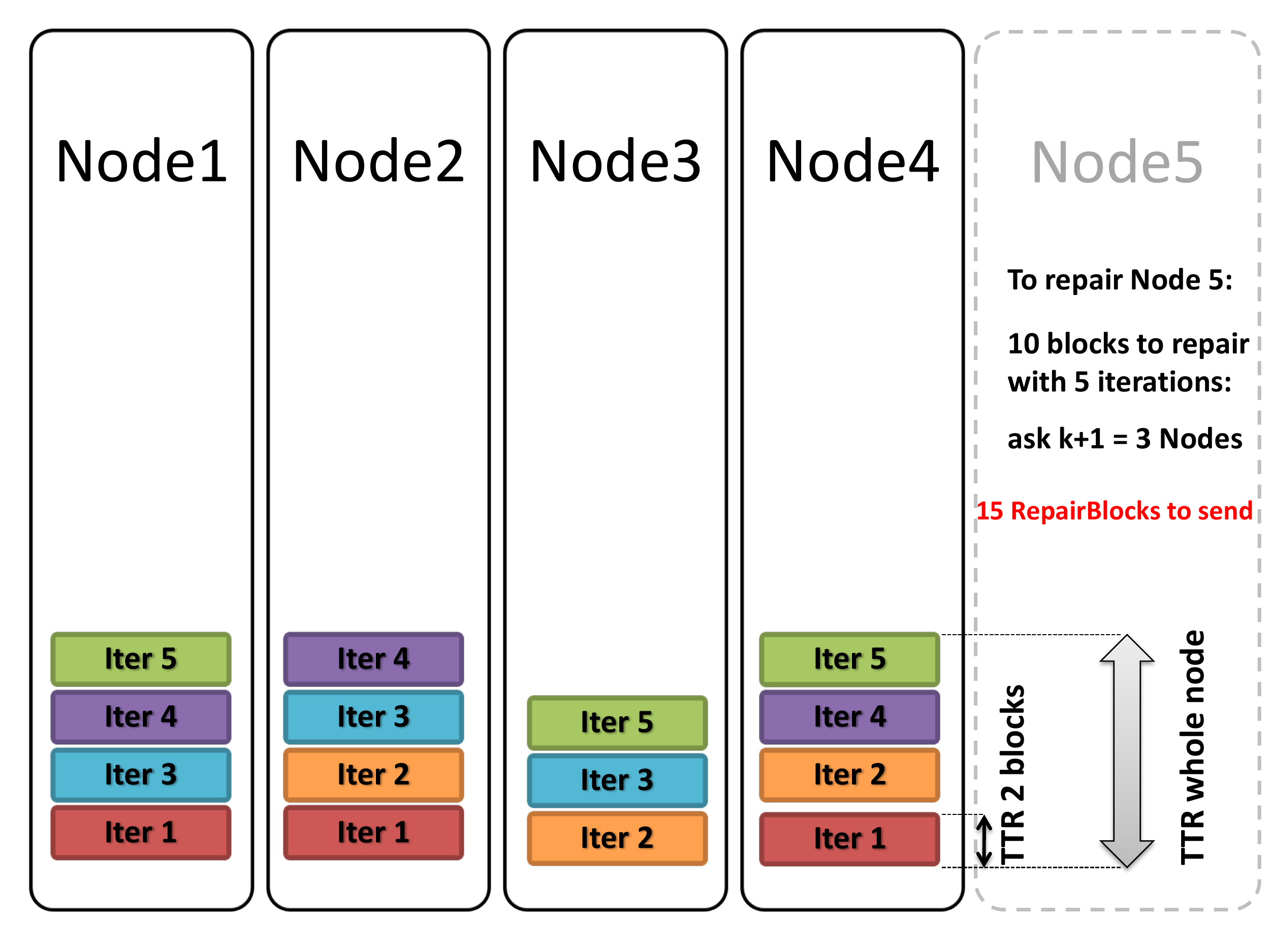}  \\  
 \caption{Natural load balancing for blocks queried when repairing a failed node (node $5$), for $10$ blocks to restore.}    
   \label{loadexample}       
 \end{center} 
  \end{figure}

As previously mentioned, when a node fails, the repair process  is iterated
as many  times  as needed to repair all lost blocks.  
\crc ensures that the load over remaining nodes is balanced during maintenance;
Figure~\ref{loadexample} illustrates this.
This example involves a $5$ node cluster, storing $10$ different
files encoded with random codes ($k=2$). Node $5$ has failed, involving the
loss of $10$ blocks of the $10$ files stored on that cluster.  
Nodes $1$ to $4$ are available for the repair process. 

\crc provides a load balanced approach, inherent to the random
selection of the $k+1=3$ nodes at each round.  In addition, only $T=5$
iterations of the repair process are necessary to recreate the $10$
new blocks, as each iteration enables to repair $2$ blocks at the same
time. The total number of RepairBlocks sent during the whole
maintenance is $T\times(k+1)=15$, whereas the classical repair process
needs to download $20$ encoded blocks.  The random selection ensures
in addition that the load is evenly balanced between the available
nodes of the cluster. Here, nodes $1$,$2$ and $4$ are selected during
the first repair round, then nodes $2$, $3$ and $4$ during the second
round and so forth.  The total number of RepairBlocks is balanced
between all available nodes, each sending
$\frac{T\times(k+1)}{n}=\frac{15}{4}=3.75$ RepairBlocks on average. As a
consequence of using the whole available bandwidth in parallel, contrary to sequentially fetching blocks for only a subset of nodes, the
Time To Repair (TTR) a failed node is also greatly reduced. This is confirmed 
 experimentally in Section~\ref{evaluation}.

\subsection{No Decoding Operations}

Decoding operations are known to be time consuming and should
therefore only be necessary in case of file accesses. 
While the use of
classical erasure codes requires such decoding to take place upon
repair, \crc avoids those cost-intensive operations.  In fact, no file
needs to be decoded at any time in \crc: repairing two blocks only
requires to compute two linear combinations instead of decoding the
two files. However the output of our repair process is strictly equivalent
if files had been decoded. 
This greatly simplify  the repair process 
over classical approaches.
As a consequence, the time to
perform a repair is reduced by order of magnitude compared to the
classical reparation process, especially when dealing with large files
as confirmed by our experiments (Section~\ref{evaluation}).

\subsection{Reintegration}

The decision to declare a node as failed is usually performed
using timeouts; this is typically a decision prone to
errors~\cite{carbonite:nsdi06}.  In fact, nodes can be wrongfully
timed-out and can reconnect once the repair is done.  While the
longer the timeouts, the fewer errors are made, adopting large
timeouts may jeopardize the reliability guarantees, typically in the
event of burst of failures.  The interest of reintegration is to
be  able to leverage  the fact that nodes which have been wrongfully
timed-out are reintegrated in the system. 
While this idea has already been explored using
replication~\cite{carbonite:nsdi06}, reintegration has not been
addressed when using erasure codes.

As previously mentioned, when using classical erasure codes, 
the repaired blocks have to be strictly identical
to the lost ones. Therefore reintegrating a failed node
in the system is almost useless for this results in two identical copies of the 
lost and repaired blocks.  Such blocks can only be useful  in the event of the failure of two specific nodes, the wrongfully timed-out node and the new node. 

On the contrary, reintegration is always useful when deploying \crc.
More precisely, every single new block can be leveraged to compensate
for the loss of any other block and therefore are useful in the
event of the failure of any node. Indeed, new created blocks are simply
new random blocks, thus different from the lost ones while being
functionally equivalent.  Therefore each new block contributes to the
redundancy factor of the cluster. Assume that a node which has been
wrongfully declared as failed returns into the system. A repair has
been performed to sustain the redundancy factor while it turned out not
to be necessary.  This only means that the system is now one repair
process ahead and  can  leverage this unnecessary repair to avoid 
triggering a new instance of the repair protocol when  the next failure occurs.

\section{Evaluation}
\label{evaluation}

In order to confirm the theoretical savings provided by the \crc
repair protocol, in terms of bandwidth utilization and decoding
operations, we deployed \crc over a public experimental
platform. We describe hereafter the implementation of the system and
\crc experimental results. 

\subsection{System Overview}
\label{implementation}
\begin{figure}[t!]    
  \begin{center}    
 \includegraphics[trim = 1cm 3cm 1cm 1cm,scale=0.34]{./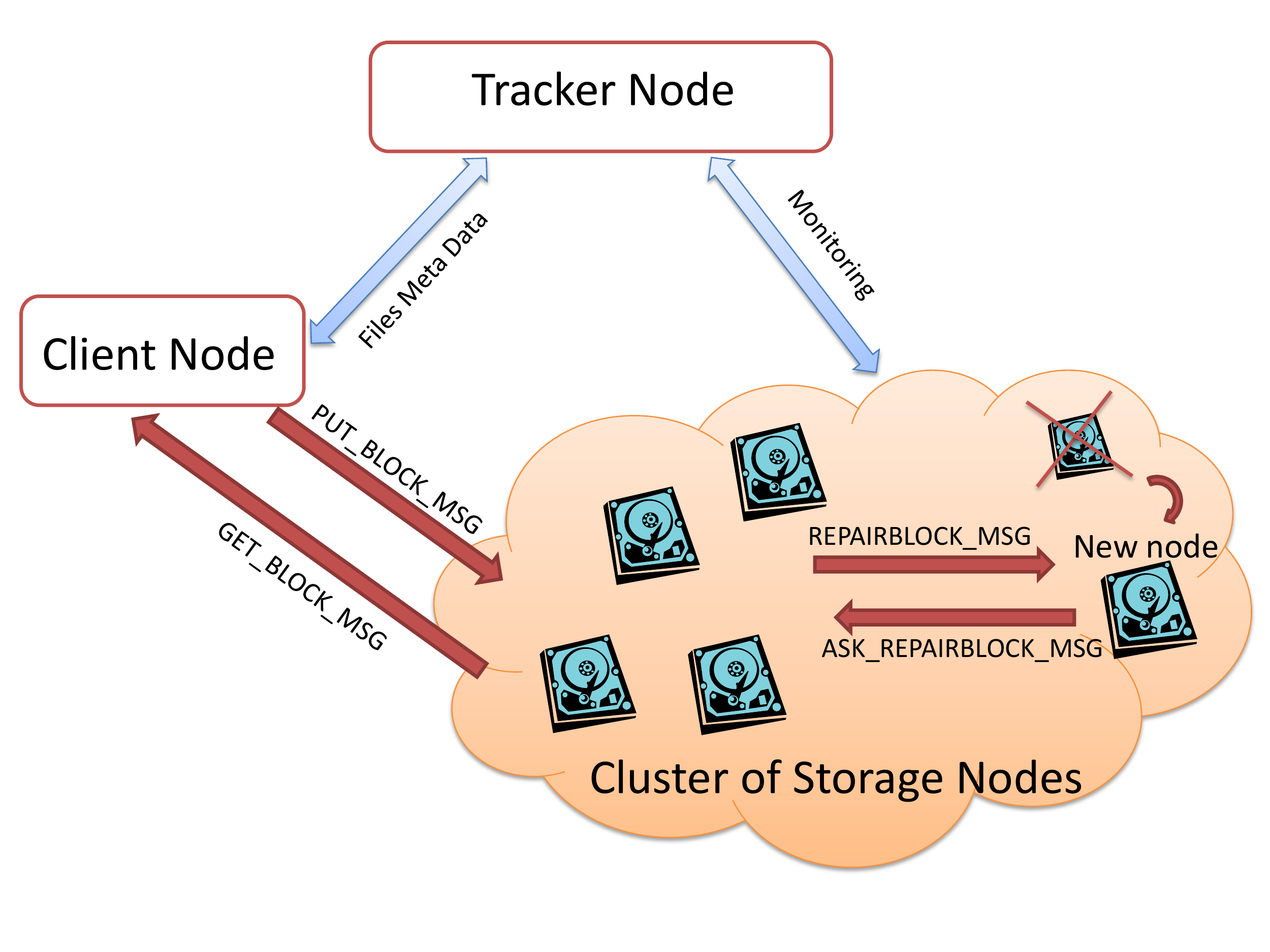} 
  \caption{System overview}  
   \label{archi}    
\end{center}     
   \end{figure}

   We implemented a simple storage cluster with an architecture
   similar to Hadoop~\cite{hadoop} or the Google File
   System~\cite{gfs}.  This architecture is composed of one
   \textit{tracker node} that manages the metadata of files, and
   several \textit{storage nodes} that store the data. This set of
   storage nodes forms a cluster as defined in
   Section~\ref{framework}. The overview of the system architecture is
   depicted in Figure~\ref{archi}. Client nodes can PUT/GET the data
   directly to the storage nodes, after having obtained their IP
   addresses from the tracker. In case of a storage node failure, the
   tracker initiates the repair process and schedules the repair jobs.

All files to be stored in the system are encoded using random codes with the same $k$.
Let $n$ be the number of storage nodes in the cluster, then $n$ encoded blocks are created for each file, one for each storage node.
Remind that the system can thus tolerate $n-k$ storage node failures before
files are lost for good.

\paragraph{PUT/GET and Maintenance Operations}

In the case of a PUT operation, the client first encodes blocks.
The coefficients of the linear combination associated to each encoded block are appended at the beginning of the block.
Those $n$ encoded blocks are sent to the $n$ storage nodes of the cluster using a {\small PUT\_BLOCK\_MSG}.
A {\small PUT\_BLOCK\_MSG} contains the encoded information, as well as the hash of the corresponding file.
Upon receipt of a {\small PUT\_BLOCK\_MSG}, the storage node stores the encoded block using the hash as filename.

To retrieve the file, the client sends a {\small GET\_BLOCK\_MSG} to at least $k$ out of the $n$ nodes of the cluster.
A {\small GET\_BLOCK\_MSG} only contains the hash of the file to be retrieved.
Upon receipt of a {\small GET\_BLOCK\_MSG} the storage node sends the block corresponding to the given hash.
As soon as the client has received $k$ blocks, the file can be recovered.


In case of a storage node failure, a new node is selected by the tracker to replace the failed one.
This new node sends a {\small ASK\_REPAIRBLOCK\_MSG} to $k+1$ storage nodes. 
An {\small ASK\_REPAIRBLOCK\_MSG} contains the two hashes of the two blocks which have to be combined following the repair protocol described in Section~\ref{framework}.
Upon receipt of an {\small ASK\_REPAIRBLOCK\_MSG}, the storage node  combines the two encoded blocks corresponding to the two hashes, and sends 
the resulting block back to the new node. As soon as $k+1$ blocks are received, the new node can regenerate two lost blocks. This process is iterated until all lost blocks are repaired.

\subsection{Deployment and Results}

We deployed the system previously described on the Grid5000 execution platform.
The experiment ran on $33$ nodes connected with a $1$GB network. Each node has 2Intel Xeon L5420 CPUs $2.5$ GHz, $32$GB RAM and a $320$GB hard drive.
We randomly chose $32$ storage nodes to form a cluster, as defined in Section~\ref{framework}. The last remaining node was elected as the tracker.
All files were encoded with $k=16$, and we assumed that the size of each inserted file is 1GB. This size is used in Windows Azure Storage for sealed extents which are erasure coded~\cite{azure}. 

\paragraph{Scenario}

In order to evaluate our maintenance protocol, we implemented a first phase of i files insertion in the cluster, and artificially triggered a repair during the second phase. 
According to the protocol previously described, the tracker selects a
new node to replace the faulty node, to which it sends to the list of
IP addresses of the storage nodes.  The new node then directly asks
RepairBlocks to storage nodes, without any intervention of the
tracker, until it recovers as many encoded blocks as the failed node
was storing.  We measured the time to repair a failed node depending
on the number of blocks it was hosting. The time to repair is defined
as the time between the reception of the list of IPs, and the time all
new encoded blocks are effectively stored on the new node.  We
compared \crc against a classical maintenance mechanism (called RS),
which would be used with Reed-Solomon codes as described in
Section~\ref{classicalMaintenance} and with standard replication.  All
the presented results are averaged on three independent experiments.
This small number of experiments can be explained by the fact that
  Grid5000 enables to make a reservation on a whole cluster of
nodes in isolation ensuring that experiments are highly reproducible
and we observed a standard deviation under 2 seconds for all values.

\paragraph{Coding} 

\begin{figure}[t] 
 \begin{center}    
 \includegraphics[trim = 1cm 0cm 1cm 1cm,scale=0.8]{./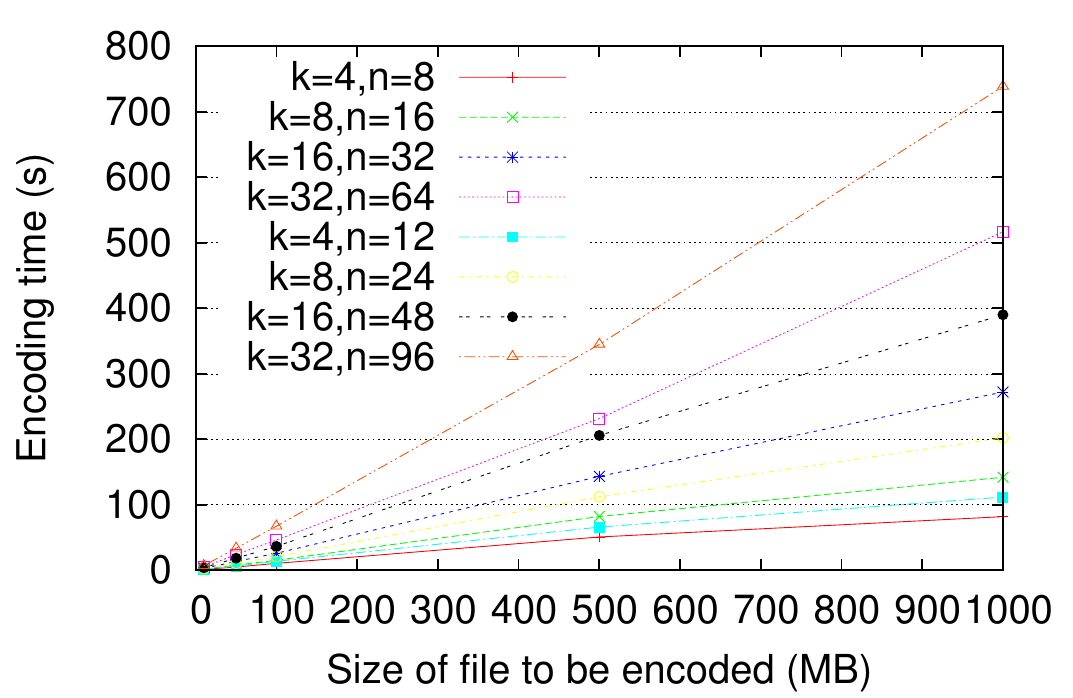} 
    \caption{Encoding time depending on file size when using random codes.}   
    \label{EncodingTime}   
  \end{center}     
   \end{figure}

We developed a Java library to deal with arithmetic operations over a
finite field\footnote{This library will be made public along with the
  paper.}. In this experiment,  arithmetic operations are
performed over a finite field with $2^{16}$ elements as it enables to
treat data as a stream of unsigned short integers (16 bits). Additions
and subtractions correspond to XOR operations between two
elements. Multiplications and divisions are performed in the logspace
using lookup tables which are computed offline.  This library enabled
us to implement classical matrix operations over finite fields, such
as linear combinations, encoding and decoding of files.

We measure the encoding time when using random codes for various code
rates, depending on the size of the file to be encoded. Results are
depicted on Figure~\ref{EncodingTime}. We show that for a given ($k,n$)
the encoding time is clearly linear with the file size. For example
with ($k=16$,$n=32$) the encoding time for a file of size $512$MB and $1$GB
are respectively $143$ and $272$ seconds.  In addition, the encoding time
increases with $k$ and with the code rate, as more encoded blocks have
to be created. For instance, a file of 1GB with $k=16$ is encoded in $272$
seconds for a code rate 1/2 ($n=32$), whereas $390$ seconds are
necessary for a code rate 1/3 ($n=48$).

\paragraph{Transfer Time}
\label{transferTime}

\begin{figure}[t] 
\vspace{-1.02cm}
 \begin{center}    
 \includegraphics[trim = 3cm 3cm 1cm 1cm,scale=0.35,keepaspectratio=true,height=5.8cm]{./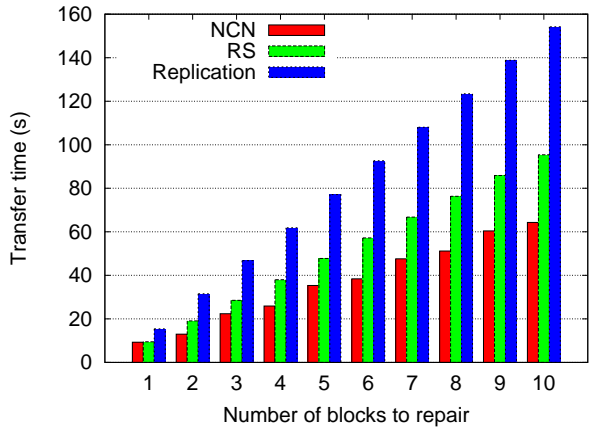}
    \caption{Time to transfer the necessary quantity of data to perform a complete repair}   
    \label{TotalTimeTo}   
  \end{center}     
   \end{figure}

We evaluated in this experiment the time to transfer the whole quantity of data needed to perform a complete repair for \crc, RS and replication, depending on the number of blocks to be repaired. In order to quantify the gains provided by \crc in isolation, we disabled the load balancing part of the protocol in this experiment. In other words, the same set of nodes is selected for all iterations of the repair process. The results are depicted on Figure~\ref{TotalTimeTo}. 

Firstly, we observe on the figure that \crc consistently outperforms
the two alternative mechanisms. As \crc incurs the transfer of a much
smaller amount of data, the time to transfer the blocks during the
repair process is greatly reduced compared to both RS and
replication. For instance, to download the necessary quantity of data
to repair a node which was hosting 10 blocks related to 10 different
files, \crc only requires $64$ seconds whereas RS and replication
requires respectively $95$ and $154$ seconds on average. It should
also be noted that no coding operations are done in this experiment,
except for \crc as nodes have to compute a random linear combination
of their encoded blocks to create a RepairBlock before sending
it. This time is taken into account, thus explaining why the transfer
time for \crc is not exactly halved compared to RS.

A second observation is that \crc also scales better with the number
of files to be repaired. As opposed to \crc, both RS and replication
involve transfer times for multiple files which are strictly
proportional to the time to transfer a single file. For example RS and
\crc requires $9$ seconds to download a single file, but RS requires
$95$ seconds to download $10$ files, while \crc only requires $64$
seconds for the same operation.

Finally, replication leads to the highest time to transfer. This is
mainly due to the fact that replication does not leverage parallel
downloads from several nodes as opposed to \crc and RS. 
Yet replication does not suffer from computational costs, which can
dramatically increase the whole repair time of a failure as shown in
the next section.

\paragraph{Repair Time}

\begin{figure}[t] 
 \begin{center}    
 \includegraphics[trim = 3cm 3cm 1cm 1cm,scale=0.35]{./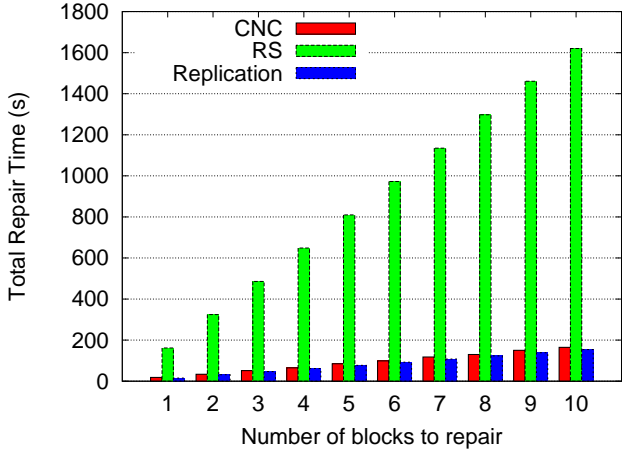} 
    \caption{Transfer time}   
    \label{TimeTo}   
  \end{center}     
   \end{figure}

In this experiment, we measured the whole repair time of a failure, depending on the number of blocks (related to different files) the failed node was storing. The results, depicted on Figure\ref{TimeTo}, include both the transfer times, evaluated in the previous section, as well as coding times. Thereby it represents the effective time between a failure is detected and the time it has been fully recovered. As replication does not incur any coding operations, the time to repair is simply the time to transfer the files. Note that for the sake of fairness, we enable the load balancing mechanism both for \crc and RS. 

Figure\ref{TimeTo} shows that the repair time is dramatically reduced when using \crc compared to RS, especially with an increasing  number of files to be repaired.  For instance to repair a node which was hosting $10$ blocks related to $10$ different files, \crc and replication require respectively $165$ and $154$ seconds while RS needs $1620$ seconds on average.

These time savings are mainly due to the fact that decoding operations
are avoided in \crc. In fact, the transfer time is almost negligible
compared to the computational time for RS.  The transfer time only 
represents  $6$\% of the time to repair a node which was hosting
$10$ blocks related to $10$ different files with RS. This clearly
emphasizes the interest of avoiding computationally intensive tasks
such as decoding during the maintenance process.

We can also observe that time to repair a failure with \crc is nearly
equivalent to the one when using replication.  As shown in
Figure~\ref{TotalTimeTo}, replication transfer times are much higher
than \crc ones, but this is counter-balanced by the fact that some
coding operations are necessary in \crc.  In other words, \crc saves
time compared to replication during the data transfer, but these
savings are cancelled out due to linear combination computations.
Finally our experiments show that, as opposed to RS, \crc scales as
well as replication with the number of files to be repaired.

\paragraph{Load Balancing} 

As shown in Section~\ref{generalCase}, \crc provides a natural load balancing feature.  The random selection of nodes from which to download blocks  during the maintenance process ensures that the load is evenly balanced between nodes. In this section, firstly  we experimentally verify that nodes are evenly loaded, then we evaluate the impact of this load balancing on the transfer time for both \crc and RS. 

Figure~\ref{load} shows the number of blocks sent by each of the 32 nodes of the cluster for a repair of a node which was storing $100$ blocks when using \crc. This involves $50$ iterations of the protocol, where at each iteration, $k+1=17$ distinct nodes send a RepairBlock. 
We observe that all nodes send a similar number of blocks \emph{i.e.}, nearly $26$, in expectation. This is consistent with the expected 
value analytically computed, according to Equation~\ref{meanLoad} as $\frac{25 \times 17}{32} = 26.5625$.

Figure~\ref{loadtime} depicts the transfer time for both RS and \crc
depending on the number of files to be repaired. We compare the
transfer time between the load balanced approach (\crc-LB and RS-LB),
and its counterpart which involves a fixed set of nodes, as done in
Section~\ref{transferTime}. Results show that transfer times are
reduced when load balancing is enabled, as the whole available
bandwidth can be leveraged. In addition, time savings due to the load
balance increases as more files have to repaired.

   \begin{figure}[!t] 
 \begin{center}    
 \includegraphics[trim = 3cm 3cm 1cm 1cm, scale=0.35]{./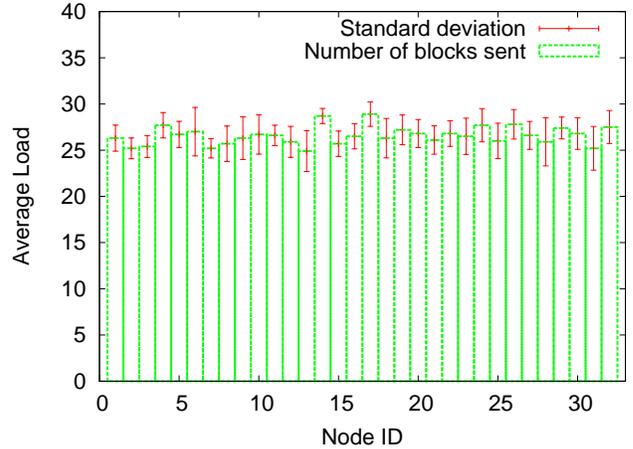} 
    \caption{Average load on each node in the cluster}   
    \label{load}   
  \end{center}     
   \end{figure}

\section{Related Work}

The problem of efficiently maintaining erasure-coded content has
triggered a novel research area both in theoretical and practical
communities. Design of novel codes tailored for networked storage
system has emerged, with different purposes.
	
For instance, in a context where partial recovering may be tolerated,
priority random linear codes have been proposed in~\cite{4711046} to
offer the property that critical data has a higher opportunity to
survive node failures than data of less importance.  Another point in
the code design space is provided by self-repairing codes~\cite{src}
which have been especially designed to minimize the number of nodes
contacted during a repair thus enabling faster and parallel
replenishment of lost redundancy.
	
In a context where bandwidth is the scarcest resource, network coding
has been shown to be a promising technique which can serve the
maintenance process.  Network coding was initially proposed to improve
the throughput utilization of a given network
topology~\cite{networkFlow}.  Introduced in distributed storage
systems in~\cite{DimakisInfocomm}, it has been shown that the use of
network coding techniques can dramatically reduce the maintenance
bandwidth. Authors of ~\cite{DimakisInfocomm} derived a class of
codes, namely \textit{regenerating codes} which achieve the optimal
tradeoffs between storage efficiency and repair bandwidth. In spite of
their attractive properties, \textit{regenerating codes} are mainly
studied in an information theory context and lack of practical
insights. Indeed, this seminal paper provides theoretical bounds on
the quantity of data to be transferred during a repair, without
supplying any explicit code constructions.  The computational cost of
a random linear implementation of these codes can be found
in~\cite{Duminuco2009}. A broad overview of the recent advances in
this research area are surveyed in~\cite{DimakisSurvey}.
     
Very recently, authors in~\cite{Khan2012}
and~\cite{Papailiopoulos2012} have designed new code especially
tailored for cloud systems. Paper~\cite{Khan2012} proposed a new class
of Reed-Solomon codes, namely \textit{rotated Reed-Solomon} codes with
the purpose of minimizing I/O for recovery and degraded read.
\textit{Simple Regenerating Codes}, introduced
in~\cite{Papailiopoulos2012}, trade storage efficiency to reduce the
maintenance bandwidth while providing exact repairs, and simple XOR
implementation.
       
Some other recent works ~\cite{Hu2012,5978919} aim to bring network
coding into practical systems. However they rely on code designs which
are not MDS, thus consuming more storage space, or are only able to
handle a single failure hence limiting their application context.

 \begin{figure}[t] 
 \begin{center}    
 \includegraphics[trim = 3cm 3cm 1cm 1cm, scale=0.35]{./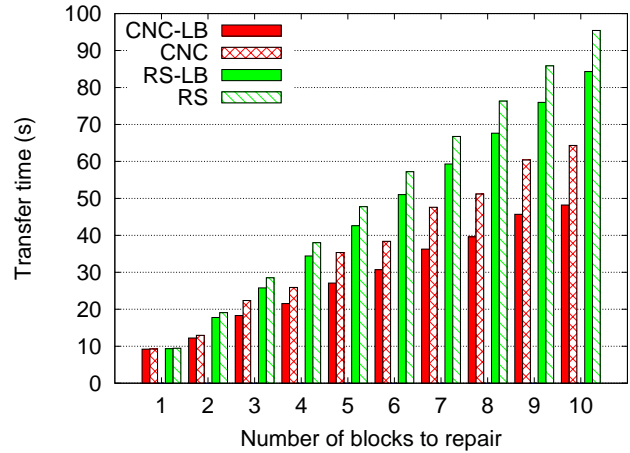} 
    \caption{Impact of load balancing on Time To Transfer}   
    \label{loadtime}   
  \end{center}     
   \end{figure}
\label{related}

\section{Conclusion}
\label{conclusion}
While erasure codes, typically Reed-Solomon, have been acknowledged as
a sound alternative to plain replication in the context of reliable
distributed storage systems, they suffer from high costs, both
bandwidth and computationally-wise, upon node repair. This is due to
the fact that for each lost block, it is necessary to download enough
blocks of the corresponding file and decode the entire file before
repairing.  

In this paper, we address this issue and provide a novel
code-based system providing high reliability and efficient maintenance
for practical distributed storage systems. The originality of our
approach, \crc, stems from a clever cluster-based placement strategy,
assigning a set of files to a specific cluster of nodes combined with
the use of random codes and network coding at the granularity of
several files.  \crc leverages network coding and the co-location of
blocks of several files to encode files together during the repair. This provides a
significant decrease of the bandwidth required during repair, avoids
file decoding and provides useful node reintegration.  We provide a
theoretical analysis of \crc. We also implemented \crc and deployed it
on a public testbed. Our evaluation shows dramatic improvement of \crc
with respect to bandwidth consumption and repair time over both plain
replication and Reed-Solomon-based approaches.

\section{Acknowledgment}
 
Experiments presented in this paper were carried out using the
Grid'5000 experimental testbed, being developed under the INRIA
ALADDIN development action with support from CNRS, RENATER and several
Universities as well as other funding bodies (see
https://www.grid5000.fr).  We thank Steve Jiekak for the
implementation of the Java library for finite field operations.

\bibliographystyle{abbrv} 
\bibliography{biblio}

\appendix


\subsection*{Proof of Proposition 1}

 
\begin{lemma}
\label{lemma1} 
 A linear combination of independent random variables chosen uniformly in a finite field $\mathbb{F}_q$ also follows a uniform distribution over $\mathbb{F}_q$.
\end{lemma}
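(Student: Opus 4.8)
The plan is to prove the (slightly more precise) statement that for fixed coefficients $a_1,\dots,a_m \in \mathbb{F}_q$, not all zero, and independent random variables $X_1,\dots,X_m$ each uniform on $\mathbb{F}_q$, the linear combination $Y = \sum_{i=1}^m a_i X_i$ is uniformly distributed on $\mathbb{F}_q$. The argument rests on two elementary facts: that multiplication by a nonzero element is a bijection of $\mathbb{F}_q$, and that translation by any element is a bijection of $\mathbb{F}_q$.

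First I would record the one-variable case: if $a \neq 0$, the map $x \mapsto a x$ is a bijection of $\mathbb{F}_q$, so $a X_i$ is again uniform on $\mathbb{F}_q$ whenever $X_i$ is. Next comes the key observation: if $Z$ is uniform on $\mathbb{F}_q$ and $W$ is any $\mathbb{F}_q$-valued random variable independent of $Z$ (with arbitrary distribution), then $Z + W$ is uniform. This follows by conditioning on $W$: for every $c \in \mathbb{F}_q$,
$$ \Pr[Z+W = c] = \sum_{w \in \mathbb{F}_q} \Pr[W=w]\,\Pr[Z = c-w] = \sum_{w \in \mathbb{F}_q} \Pr[W=w]\cdot \frac{1}{q} = \frac{1}{q}, $$
using that translation by $-w$ is a bijection, hence $\Pr[Z = c-w] = 1/q$.

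To conclude, I would discard the terms with $a_i = 0$ (they do not affect $Y$) and relabel so that $a_1 \neq 0$. Setting $Z = a_1 X_1$, which is uniform by the one-variable case, and $W = \sum_{i \geq 2} a_i X_i$, which is independent of $Z$ because $X_1$ is independent of $X_2,\dots,X_m$, the key observation yields that $Y = Z + W$ is uniform on $\mathbb{F}_q$.

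I do not expect a genuine obstacle here; the only points needing care are stating the hypothesis correctly (at least one nonzero coefficient, otherwise $Y \equiv 0$ and the claim fails) and being explicit that $W$ may carry an arbitrary distribution, so that the single independence of $X_1$ from the remaining variables is all the final step consumes. An alternative, even terser route uses the additive characters of $\mathbb{F}_q$: showing $E(\chi(Y)) = 0$ for every nontrivial character $\chi$ forces the distribution to be uniform; but that is heavier machinery than this statement warrants.
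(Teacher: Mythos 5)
Your proof is correct, and it takes a cleaner route than the paper does. The paper proves the lemma by induction on the number of summands $N$: the base case $N=2$ is a direct convolution computation over $\mathbb{F}_q$, and the inductive step peels off the last term $\alpha_{N+1}X_{N+1}$ and repeats essentially the same computation, using both that $\mathbb{S}_N$ is uniform (by the induction hypothesis) and that $X_{N+1}$ is uniform. The paper also sidesteps the zero-coefficient issue by hypothesis, requiring $\alpha_i \in \mathbb{F}_q^\ast$ for all $i$.

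You instead isolate the one fact doing all the work --- if $Z$ is uniform on $\mathbb{F}_q$ and $W$ is \emph{any} $\mathbb{F}_q$-valued random variable independent of $Z$, then $Z+W$ is uniform --- and apply it once, with $Z = a_1X_1$ (uniform because multiplication by a nonzero scalar is a bijection) and $W$ the sum of the remaining terms. This buys two things. First, it dispenses with the induction entirely: the paper's inductive step is really a repeated, weaker application of your key observation (weaker because the paper uses uniformity of both operands, when only one is needed). Second, it makes the correct hypothesis transparent: you only need \emph{some} coefficient to be nonzero, and the distributions of the other $X_i$ are irrelevant to the conclusion, whereas the paper's statement and proof suggest all coefficients must be nonzero. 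Your version is both more elementary in structure and slightly more general in content; the paper's is more mechanical but arrives at the same place.
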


\begin{proof}
Let $\mathbb{S}_N$ be the random variable defined by the linear combination of $N$ random variables $\lbrace X_1,X_2,...,X_N\rbrace$ . These $N$ random variables are independent and take their values uniformly in the finite field $\mathbb{F}_q$.

 \begin{equation*}
\label{linearcomb}
\mathbb{S}_N = \sum_{i=1}^{N} \alpha_i X_i   
 \end{equation*}

with $\forall i, X_i  \in \mathbb{F}_q$, and $\alpha_i \in  \mathbb{F}_q^\ast$

We show by recurrence that if $ \forall i, Pr(X_i = x_i) = \frac{1}{q}$ then  
$Pr(\mathbb{S}_N = s_N)  = \frac{1}{q}$

The case $N=1$ is trivial. 
Let first show that for $N=2$ the proposition is true.

\begin{equation*}
\begin{split}
\mathbb{S}_2 &= \alpha_1 X_1 + \alpha_2 X_2 \\
Pr(\mathbb{S}_2 = s_2) &= Pr( \alpha_1 X_1 + \alpha_2 X_2 = s_2)\\
& = \sum_{x_1=0}^{q-1} Pr(X_1 = x_1) Pr(X_2 = \frac{s_2 - \alpha_1x_1}{\alpha_2})\\
& = \sum_{x_1=0}^{q-1} \frac{1}{q}  \frac{1}{q} = q  \frac{1}{q}  \frac{1}{q}\\
& = \frac{1}{q}\\
\end{split}
\end{equation*}
The proposition is thus true for $N=2$.
We suppose that it is true for all $N$, and prove that it is true for $N+1$.

\begin{equation*}
\begin{split}
\mathbb{S}_{N+1}&=\mathbb{S}_N+\alpha_{N+1}X_{N+1}\\
Pr(\mathbb{S}_{N+1}=s_{N+1})&=Pr(\mathbb{S}_N+\alpha_{N+1}X_{N+1}= s_{N+1})\\
&=\sum_{x_{N+1}=0}^{q-1}[Pr(X_{N+1}=x_{N+1})\\
& \times Pr(\mathbb{S}_N=s_{N+1}-\alpha_{N+1} x_{N+1})]\\
&=\sum_{x_{N+1}=0}^{q-1}\frac{1}{q}\frac{1}{q}= q\frac{1}{q}\frac{1}{q}\\
&=\frac{1}{q}\\
\end{split}
\end{equation*}
\end{proof}

\begin{definition}
A random vector $\vec{V}$ in a vector space $ X = \operatorname{span} \lbrace X_1,X_2,...,X_k \rbrace$  where $X_i \in \mathbb{F}^l_q$ is defined as : 

 \begin{equation*}
\label{linearcomb2}
V = \sum_{i=1}^{k} \alpha_i X_i   
 \end{equation*}

where the $\alpha_i$ coefficients are chosen uniformly at random in the field $\mathbb{F}_q $, ie. $Pr(\alpha_i = \alpha) = \frac{1}{q} ,  \forall \alpha \in \mathbb{F}_q$.
\end{definition}

Let X be the vector space defined as $\operatorname{span} \lbrace X_1,X_2,...,X_k \rbrace$.
Let Y be the vector space defined as $\operatorname{span} \lbrace Y_1,Y_2,...,Y_k \rbrace$.
No assumptions are made on $X_i$ and $Y_i$ except that they are all in $\mathbb{F}^l_q$.
In fact as $X_i$ and $Y_i$ are file blocks, it is not possible to ensure linear independence for example.

Let $B^i_x$ be an encoded block of the file $F_x$ stored on node $i$. $B^i_x$ is a random linear combination of the ${\lbrace X_1,X_2,...,X_k \rbrace}$, thus $B^i_x \in \operatorname{span} \lbrace X_1,X_2,...,X_k \rbrace = X$ which is a subspace of $\mathbb{F}^l_q$ of dimension 
$\operatorname{Dim} (X) \leq k$.

\begin{lemma} 
$\forall \operatorname{Dim} (X)$, $B^i_x$ is a random vector in $X$.
\end{lemma}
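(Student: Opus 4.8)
The plan is to prove something slightly stronger than the bare statement, namely that $B^i_x$ is \emph{uniformly distributed} over the subspace $X$, and that this is true for every possible value of $\dim(X)$ --- in particular when the file blocks $X_1,\dots,X_k$ are linearly dependent, which is the realistic situation. Uniformity over $X$ is exactly the property that the preceding Definition of a ``random vector in $X$'' is meant to capture, so once it is established the lemma follows.

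The tool I would use is the evaluation map $\phi:\mathbb{F}_q^k\to\mathbb{F}_q^l$ given by $\phi(a_1,\dots,a_k)=\sum_{i=1}^{k}a_iX_i$. It is $\mathbb{F}_q$-linear and its image is exactly $X=\operatorname{span}\{X_1,\dots,X_k\}$. By definition $B^i_x=\phi(\alpha)$ with $\alpha=(\alpha_1,\dots,\alpha_k)$ uniform on $\mathbb{F}_q^k$, since the $\alpha_i$ are i.i.d.\ uniform. By the rank--nullity theorem $\dim(\ker\phi)=k-\dim(X)$, so $|\ker\phi|=q^{\,k-\dim(X)}$. For any $v\in X$ the fibre $\phi^{-1}(v)$ is a coset of $\ker\phi$ and hence has exactly $q^{\,k-\dim(X)}$ elements, while $\phi^{-1}(v)=\emptyset$ for $v\notin X$. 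Since $\alpha$ is uniform over a set of size $q^k$, this yields $\Pr(B^i_x=v)=q^{\,k-\dim(X)}/q^{k}=q^{-\dim(X)}=1/|X|$ for every $v\in X$, a value that does not depend on $\dim(X)$. Thus $B^i_x$ is a uniform random vector of $X$, which is the claim.

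The point that requires care --- and the reason Lemma~\ref{lemma1} cannot simply be applied coordinate by coordinate --- is that the $l$ coordinates of $B^i_x$ all share the \emph{same} scalars $\alpha_1,\dots,\alpha_k$ and are therefore \emph{not} independent; marginal uniformity of each coordinate, which is all Lemma~\ref{lemma1} would give, does not imply joint uniformity over $X$. This is why I phrase the argument at the level of the linear map $\phi$ and its kernel. An equivalent elementary route is to first pass to a basis $X_{i_1},\dots,X_{i_d}$ of $X$ (with $d=\dim X$), write $B^i_x=\sum_{r=1}^{d}\beta_rX_{i_r}$ where each $\beta_r$ is a linear combination of the $\alpha_i$ in which $\alpha_{i_r}$ occurs with coefficient $1$, and then, conditioning on the remaining $\alpha_i$, note that $(\beta_1,\dots,\beta_d)$ is a translate of the i.i.d.\ uniform vector $(\alpha_{i_1},\dots,\alpha_{i_d})$ and hence itself uniform on $\mathbb{F}_q^d$; composing with the coordinate isomorphism between $\mathbb{F}_q^d$ and $X$ again gives uniformity on $X$. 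Either way, no difficulty is introduced by linear dependence among the $X_i$, which is precisely what the ``$\forall\,\dim(X)$'' in the statement is recording.
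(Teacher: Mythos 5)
Your proposal is correct, and the primary argument you give is genuinely different from the paper's. You work directly with the linear evaluation map $\phi:\mathbb{F}_q^k\to\mathbb{F}_q^l$, $\phi(\alpha)=\sum_i\alpha_iX_i$, and use rank--nullity to count the fibres; this gives uniformity of $B^i_x$ on $X$ in one stroke, with no need to choose a basis. The paper instead does a basis change: it picks a maximal linearly independent subfamily $\mathcal{B}\subseteq\{X_1,\dots,X_k\}$, rewrites $B^i_x=\sum_{j\mid X_j\in\mathcal{B}}\bigl(a^i_j+\sum_{l\mid X_l\notin\mathcal{B}}a^i_lb^l_j\bigr)X_j$, and then invokes Lemma~1 to say each new coefficient is uniform over $\mathbb{F}_q$. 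That is exactly your ``equivalent elementary route.'' Your write-up of that route is more careful than the paper's in one respect: you observe that each new coefficient contains its own $a^i_j$ with coefficient $1$, so that conditioning on $\{a^i_l: X_l\notin\mathcal{B}\}$ exhibits the coefficient vector as a translate of an i.i.d.\ uniform tuple, hence jointly uniform on $\mathbb{F}_q^{\dim X}$. The paper only applies Lemma~1 coefficient-by-coefficient, which establishes marginal uniformity; under the paper's own (rather loose) definition of ``random vector'' --- which only requires each $\alpha_i$ to be marginally uniform --- this suffices literally, but your joint-uniformity observation is what makes the lemma actually useful downstream, so your version is the stronger and more defensible one. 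Both of your arguments are sound; the $\phi$/kernel argument buys you a basis-free proof and the clean conclusion $\Pr(B^i_x=v)=1/|X|$ for all $v\in X$, while the paper's basis-change route buys a proof that stays closer to the coordinate description used in the rest of the appendix.
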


\begin{proof}

Let $\mathcal{B}$ be the largest family of linearly independent vectors of $\lbrace X_1,X_2,...,X_k \rbrace$

$\forall l \mid X_l \not\in \mathcal{B}, \exists! \lbrace b^l_1,...,b^l_j  \rbrace$ such that
$X_l = \sum_{\tiny{j\mid X_j \in \mathcal{B}}} b^l_j X_j$

\begin{equation*}
\begin{split}
B^i_x &= \sum_{j=1}^{k} a^i_j X_j\\
& = \sum_{\tiny{j\mid X_j \in \mathcal{B}}} a^i_j X_j +       \sum_{\tiny{l\mid X_l \not\in \mathcal{B}}} a^i_l X_l\\
& = \sum_{\tiny{j\mid X_j \in \mathcal{B}}} a^i_j X_j +      
\sum_{\tiny{l\mid X_l \not\in \mathcal{B}}} a^i_l \sum_{\tiny{j\mid X_j \in \mathcal{B}}} b^l_j X_j \\
& = \sum_{\tiny{j\mid X_j \in \mathcal{B}}} (a^i_j + \sum_{\tiny{l\mid X_l \not\in \mathcal{B}}} a^i_lb^l_j) X_j \\
\end{split}
\end{equation*}

From Lemma~\ref{lemma1}, all the coefficients of the linear combination  are random over $\mathbb{F}_q$ thus $B^i_x$ is a random vector in 
$\operatorname{span} (\mathcal{B})$.

As $\operatorname{span} (\mathcal{B}) = \operatorname{span} \lbrace X_1,X_2,...,X_k \rbrace = X $

Then $B^i_x$ is a random vector in $X$.
\end{proof}

Let $D^i$ be the random linear combination of two stored blocks by the node $i$ with $i \in [1,k+1]$.

\begin{equation*}
\begin{split}
D^i &= \delta^i_x B^i_x + \delta^i_y B^i_y\\
& =  \delta^i_x (\sum_{j=1}^{k} a^i_j X_j) +  \delta^i_y (\sum_{l=1}^{k} a^i_l Y_l)\\
&  = D^i_x + D^i_y\\
\end{split}
\end{equation*}

By definition,  $D^i \in \operatorname{span} \lbrace X_1,X_2,...,X_k,Y_1,...,Y_k \rbrace$

$D^i_x = \sum_{j=1}^{k}\delta^i_x a^i_j X_j$

As $\delta^i_x$ are chosen randomly in $\mathbb{F}_q$, then from Lemma~\ref{lemma1}, $D^i_x$ is a random vector in X.

 Let's take a family $\lbrace D^1_x,...,D^{k+1}_x\rbrace$.\\
 As $\operatorname{Dim} (X)  \leq k $ it exists $\lbrace \alpha_1,
 ...,\alpha_{k+1} \rbrace \neq 0$ such that $\sum_{i=1}^{k+1}\alpha_iD^i_x = 0$
 
 Thus :
 
\begin{equation*}
\begin{split}
\sum_{i=1}^{k+1}\alpha_iD^i &= \sum_{i=1}^{k+1}\alpha_iD^i_x +
\sum_{i=1}^{k+1}\alpha_iD^i_y\\
& = \sum_{i=1}^{k+1}\alpha_iD^i_y\\
\end{split}
\end{equation*}

As $\alpha_i$ are chosen independently with $D^i_y$ then 
new vector is a random vector in $Y$.
The reasoning is identical to get the new vector in $X$, thus completing the proof.

\subsection*{Proof of Equation (1)}

During the repair process, the load on each node can be evaluated using a Balls-in-Bins model.
Balls correspond to a block to be downloaded while bins represents the nodes which are storing the blocks.
For each iteration of the repair protocol, k different nodes are selected to send a repair block.
This corresponds to throwing k identical balls into n bins, with the constraints that once a bin has received a ball, it can not receive another ball at this round. In other words exactly k different bins are chosen at each round.
\begin{lemma} 
\label{lemma3}
At each round $i$, the probability that a given bin has received one ball is $\frac{k}{n}$
\end{lemma}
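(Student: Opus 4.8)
The plan is to compute the probability directly from the uniform random selection of $k$ distinct bins at each round, using a simple counting (or symmetry) argument. Fix a round $i$ and a given bin $b$. Since the protocol picks a set of exactly $k$ distinct bins uniformly at random among the $\binom{n}{k}$ possible $k$-subsets of the $n$ bins, the event ``bin $b$ receives a ball at round $i$'' is exactly the event ``$b$ belongs to the chosen $k$-subset''.

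First I would count the favorable outcomes: the number of $k$-subsets of $\{1,\dots,n\}$ that contain $b$ is $\binom{n-1}{k-1}$, since once $b$ is included we must choose the remaining $k-1$ bins from the other $n-1$. Dividing by the total number of subsets gives
\begin{equation*}
\Pr[\,b \text{ gets a ball at round } i\,] = \frac{\binom{n-1}{k-1}}{\binom{n}{k}} = \frac{k}{n},
\end{equation*}
where the last equality is the standard identity $\binom{n-1}{k-1}/\binom{n}{k} = k/n$. Alternatively, and perhaps more cleanly, I would invoke symmetry: all $n$ bins play identical roles in the random selection, so each has the same probability $p$ of being among the $k$ chosen; since exactly $k$ bins are selected, linearity of expectation gives $np = \mathbb{E}[\#\text{selected bins}] = k$, hence $p = k/n$. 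This avoids the binomial identity entirely.

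There is essentially no obstacle here: the statement is an immediate consequence of the uniform-random-subset model already set up in the surrounding text, and the only thing to be careful about is to state explicitly that the $k$ bins are chosen uniformly among all $k$-subsets (so that the symmetry argument applies) and that this holds independently of what happened in previous rounds — which is precisely the modeling assumption declared just before the lemma. Note this lemma is then the single building block needed for Equation~(1): summing the per-round indicator over $T$ rounds and applying linearity of expectation yields $E(N) = T\frac{k}{n}$ (with $k$ playing the role of $k+1$ once one accounts for the $k+1$ RepairBlocks sent per iteration in the actual protocol).
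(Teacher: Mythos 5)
Your direct counting argument (favorable $k$-subsets containing $b$, namely $\binom{n-1}{k-1}$, out of $\binom{n}{k}$ total) is essentially the paper's own proof: the paper counts the complement event $\binom{n-1}{k}/\binom{n}{k}$ and subtracts from $1$, an immaterial variation of the same uniform-random-subset calculation. Your symmetry-plus-linearity alternative is a slightly cleaner route that sidesteps the binomial identity, and your closing remark about the $k$ vs.\ $k+1$ role in Equation~(1) correctly flags a notational shift that the paper's appendix itself glosses over.
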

\begin{proof}
Let A be the event "the bin contains one ball at round $i$". 
Thus $\overline{A}$ corresponds to the event "the bin is empty at round $i$". 
$Pr(\overline{A})$ is computed as the number of ways to place the $k$ balls inside the $n-1$ remaining bins, over all the possibilities to place the $k$ balls into the $n$ bins. 

\begin{equation*}
\begin{split}
Pr(A) &= 1 - Pr(\overline{A})\\
& = 1 - \frac{C_k^{n-1}}{C_k^n}\\
& = 1 - \frac{\frac{(n-1)!}{k! (n-1-k)!} }{\frac{n!}{k! (n-k)!}}\\
& = 1 - \frac{(n-k)!}{n(n-1-k)!}\\
& = 1 - \frac{(n-k)}{n}\\
& = \frac{k}{n}\\
\end{split}
\end{equation*} 
\end{proof}

Let X be the number of balls into a given bin after t rounds.
As the selection at each round are independent, the number of balls into a given bin follows
a binomial law : \\
$X \sim \operatorname{B} \left({t, p}\right)$ with $p = \frac{k}{n}$ (See Lemma~\ref{lemma3}) 
The expected value, denoted E(X), of the Binomial random variable X with parameters t and p is : 
$E(X) = tp = t\frac{k}{n}$

\end{document}